\newtheorem{theorem}{Theorem}
\newtheorem{lemma}{Lemma}
\newtheorem{definition}{Definition}
\journal{Elsevier journals}
\def\ps@pprintTitle{%
  \let\@oddhead\@empty
  \let\@evenhead\@empty
  \def\@oddfoot{\reset@font\hfil\thepage\hfil}
  \let\@evenfoot\@oddfoot
}
\begin{document}

\begin{frontmatter}

\title{Pattern formation and bifurcation analysis of delay induced fractional-order epidemic spreading on networks}


\author[mymainaddress,mysecondaryaddress]{Jiaying Zhou}
\ead{jyzhou0513@gmail.com}

\author[mymainaddress,mysecondaryaddress]{Yong Ye}
\ead{yong\_ye1994@163.com}

\author[mysecondaryaddress]{Alex Arenas\corref{cor1}}
\ead{alexandre.arenas@urv.cat}

\author[mysecondaryaddress]{Sergio G\'omez}
\ead{sergio.gomez@urv.cat}

\author[mymainaddress]{Yi Zhao\corref{cor1}}
\cortext[cor1]{Corresponding author}
\ead{zhao.yi@hit.edu.cn}

\address[mymainaddress]{School of Science, Harbin Institute of Technology, Shenzhen, 518055 P.\ R.\ China}
\address[mysecondaryaddress]{Departament d'Enginyeria Inform{\`a}tica i Matem{\`a}tiques, Universitat Rovira i Virgili, 43007 Tarragona, Spain}

\begin{abstract}

The spontaneous emergence of ordered structures, known as Turing patterns, in complex networks is a phenomenon that holds potential applications across diverse scientific fields, including biology, chemistry, and physics. Here, we present a novel delayed fractional-order susceptible-infected-recovered-susceptible (SIRS) reaction-diffusion model functioning on a network, which is typically used to simulate disease transmission but can also model rumor propagation in social contexts.
Our theoretical analysis establishes the Turing instability resulting from delay, and we support our conclusions through numerical experiments. We identify the unique impacts of delay, average network degree, and diffusion rate on pattern formation. The primary outcomes of our study are: (i) Delays cause system instability, mainly evidenced by periodic temporal fluctuations; (ii)
The average network degree produces periodic oscillatory states in uneven spatial distributions; (iii) The combined influence of diffusion rate and delay results in irregular oscillations in both time and space. However, we also find that fractional-order can suppress the formation of spatiotemporal patterns. These findings are crucial for comprehending the impact of network structure on the dynamics of fractional-order systems.
\end{abstract}

\begin{keyword}
Time-fractional order\sep Delay\sep Spatiotemporal pattern\sep Average degree
\MSC[2010] 92D30\sep 92C42
\end{keyword}

\end{frontmatter}


\section{Introduction}\label{section1}

The study of reaction-diffusion system patterns has been a central focus in research for a long time. The inception of these studies dates back to 1952 when Turing demonstrated that the activator-to-inhibitor diffusion coefficient ratio could cause the destabilization of a steady state, leading to the emergence of periodic spatial patterns~\cite{turing1952chemical}. This phenomenon is now known as the Turing pattern. Turing patterns have been observed in various scenarios, such as autocatalytic chemical reactions with inhibition~\cite{prigogine1968symmetry,castets1990experimental,ouyang1991transition,asllanni22}, epidemic spreading~\cite{sun2012pattern,chang2020cross,chang2022optimal,chang2022sparse,zheng2022pattern,zhou2022complex}, and even ecology~\cite{fernandes2012turing,zhang2014delay,liu2019pattern}. Othmer and Scriven, as early as 1971, highlighted that Turing instability might occur in networked systems and play a significant role in the initial stages of biological morphogenesis, as it spreads through the network connections between cells~\cite{othmer1971instability}. They proposed a general mathematical framework to analyze network instability and further investigated it~\cite{othmer1974non,horsthemke2004network,moore2005localized}, leading to a series of related works~\cite{petit2017theory,zheng2020turing,muolo2023turing}. For instance, in 2010, Nakao and Mikhailov studied Turing patterns in large random networks and observed multiple steady-state coexistences and hysteresis effects~\cite{nakao2010turing}.
Especially during the spread of epidemics, the diffusion of pathogens (similar substances) from high-density spatial regions to low-density spatial regions has led to the development of recognizable spatial explicit models. The research results of pattern dynamics can reveal the distribution structure of populations after spatial diffusion. This enables people to effectively utilize and control population resources. Additionally, these findings provide the scientific basis for preventing and controlling infectious diseases~\cite{sun2016pattern,stancevic2013turing}.

Delays are a widespread phenomenon in natural environments. They can be observed in the gestation period of animals, or in disease transmission models, where delays arise from latent periods or healing cycles, leading to periodic disease outbreaks~\cite{ye2021bifurcation,ye2022promotion,zhou2022bifurcation}.
Subsequently, several studies proposed the use of fractional derivative equations to establish mathematical models for predicting COVID-19~\cite{lu2020fractional,kilicman2018fractional,higazy2020novel}. For example, in 2020, Zhang et al. demonstrated that impacts of death and human activities on nonlocal memory could be captured through a time-fractional derivative equation, contributing to our understanding of COVID-19's death and remission rates~\cite{zhang2020applicability}.
In the same year, Xu and colleagues proposed an improved fractional order SEIQRP model. When tested with epidemic data from the United States, this model successfully predicted short-term epidemic trends. Their results showed that the model effectively characterized the process of disease transmission, providing a theoretical basis for understanding the epidemic~\cite{xu2020forecast}.
Given the universality of delays, in 2019, Chang and his colleagues examined delay-induced Turing patterns using the modified Leslie-Gower model. They analyzed pattern formation in various networks~\cite{chang2019delay}. The following year, they studied Turing patterns on multiplex networks with both self-diffusion and cross-diffusion. Their research resulted in the discovery of heterogeneous patterns exhibiting rich characteristics~\cite{gao2020cross}.

Since the life cycle incorporates memory, fractional calculus equations have been employed to study system dynamics, as integral-order equations cannot account for this inherent memory~\cite{du2013measuring,djilali2020turing,zhang2022impact}. Therefore, in 2022, Zheng et al.\ explored Turing patterns of a fractional-order system on a random network based on the SIR model, discovering that delay and diffusion coefficients influence pattern generation~\cite{zheng2022turing}. However, they used a small random network built with a certain probability, which could not effectively reveal the impact of the network's average degree on pattern formation.

Motivated by Nakao and Mikhailov's work~\cite{nakao2010turing}, we aim to conduct research based on Erd\H{o}s–R\'enyi (ER) random networks to reflect the average network degree's influence on pattern formation, which has been well-established in integer-order systems. To the best of our knowledge, there are limited frameworks that study the effects of delay, diffusion coefficient, time-fractional order, and network average degree on Turing patterns in a delay time-fractional order system. Consequently, this paper plans to introduce factors like delay, diffusion coefficient, and network average degree based on a simple SIRS model, and further investigate whether the time-fractional order affects the uniform stationary state of space, considering diffusion terms in the three-component system as in~\cite{kuznetsov2022robust}.

The delay-induced time-fractional SIRS equations are formulated as follows:
\begin{equation} \label{sys1}
\left\{\begin{aligned}
&D^{q} S_i(t)=\Lambda-\beta S_i(t-\tau) I_i(t-\tau)-\mu S_i(t)+\nu R_i(t)+d_1 \sum_{j=1}^N A_{i j} \left(S_j-S_i\right), \\
&D^{q} I_i(t)=\beta S_i(t-\tau) I_i(t-\tau)-(\gamma+\mu+\alpha) I_i(t)+d_2 \sum_{j=1}^N A_{i j} \left(I_j-I_i\right), \\
&D^{q} R_i(t)=\gamma I_i(t)-(\mu+\nu) R_i(t),\\
&S_i(0)=u_i(t),~I_i(0)=v_i(t),~R_i(0)=w_i(t),
\end{aligned}\right.
\end{equation}
where $D^q$ is the Caputo derivative and $q \in (0, 1]$ is the order of the differential operator. $S_i$, $I_i$ and $R_i$ represent the density of $S$ (susceptible), $I$ (infected)  and $R$ (recovered) in node $i$. Disease transmission (reaction term) occurs inside the node. Concurrently, the diffusive flux of the susceptible $S$ or infected $I$ to node $i$ is the diffusion term, which is expressed as $\sum_{j=1}^N A_{i j}\left(S_j-S_i\right)$ or $\sum_{j=1}^N A_{i j}\left(I_j-I_i\right)$, where $i,j\in \{1,2,\ldots,N\}$. Here, $A_{ij}$ is one if nodes $i$ and $j$ are connected, zero otherwise, i.e., $A$ is the adjacency matrix of the diffusion network. We suppose this network is undirected, thus $A$ is symmetric. Note that, for simplicity, we have considered that there is no diffusion for recovered individuals.
 Other parameters carry the following biological significance: $\Lambda$ indicates the birth rate of $S$, $\beta$ is the transmission rate between susceptible and infected populations, $\mu$ represents the natural mortality rate of populations $S$, $I$, and $R$, $\nu$ is the ratio at which the recovered population returns to the susceptible compartments without acquiring immunity, $\gamma$ denotes the recovery rate of infected individuals, $\alpha$ is the disease-related death rate, $\tau$ corresponds to the disease's latent period, and $d_1$ and $d_2$ represent the self-diffusion coefficients of susceptible and infected, respectively.

The remainder of this paper is organized as follows. In Sec.~\ref{section2}, we present the basic definition and stability lemma for fractional differential equations. In Sec.~\ref{section3}, we theoretically prove the stability of the model without delay and subsequently analyze the Turing instability condition induced by delay. In Sec.~\ref{section4}, we conduct relevant numerical experiments to validate the theoretical findings from previous sections and examine the effects of network average degree, delay, diffusion coefficient, and fractional order on the spatiotemporal pattern. Finally, we discuss the results of our analysis and provide an outlook for future work in Sec.~\ref{section5}.

\section{Preliminaries}\label{section2}

The Caputo fractional derivative is widely used in engineering applications due to its convenience. Therefore, we provide the definition of the Caputo fractional derivative and some essential lemmas for analyzing the stability of fractional-order systems as follows:
\begin{definition} \label{definition1}\cite{podlubny1999199}
The Caputo fractional-order derivative is defined as
\begin{equation*}
_{t_0}^CD_t^q f(t)=\frac{1}{\Gamma(n-q)}\int_{t_0}^t\frac{f^{(n)}(\tau)}{(t-\tau)^{q+1-n}}d\tau,
\end{equation*}
where $q\in (n-1,n)$ and $\Gamma(\cdot)$ is Gamma function.
In particular, we have
 \begin{equation*}
 _{t_0}^CD_t^qf(t)=\frac{1}{\Gamma(1-q)}\int_{t_0}^t\frac{f'(\tau)}{(t-\tau)^{q}}d\tau.
\end{equation*}
when $q\in (0,1)$.
For convenience, we denote $_{t_0}^CD_t^q f(t)$ as $D^q f(t)$.
\end{definition}
\begin{lemma}\cite{matignon1996stability}\label{lemma1}
Consider the fractional-order system
$$
D_t^q x(t)=f(t, x(t))
$$
with initial condition $x\left(t_0\right)=x_{t_0}$, where $q \in(0,1]$. The equilibrium points
are locally asymptotically stable if all eigenvalues $\lambda_i$ of the Jacobian matrix $\frac{\partial f(t, x)}{\partial x}$ calculated at them satisfy
$\left|\arg \left(\lambda_i\right)\right|> \frac{q\pi}{2}$.
\end{lemma}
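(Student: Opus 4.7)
The plan is to reduce the nonlinear stability question to the stability of the linearization at the equilibrium, then analyze the linearized system via the Laplace transform. First I would linearize the system around an equilibrium $x^*$, writing $y(t)=x(t)-x^*$ and obtaining, to leading order, the linear fractional system $D^q y(t)=J y(t)$, where $J=\partial f/\partial x$ evaluated at $x^*$. The validity of passing from the nonlinear system to this linearization is the fractional analogue of Lyapunov's first method; I would invoke the standard result that, if the zero solution of the linearization is Mittag-Leffler stable, the equilibrium of the nonlinear system is locally asymptotically stable.

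Next I would analyze the linear fractional system by applying the Laplace transform. Using the Caputo derivative rule $\mathcal{L}\{D^q y\}(s)=s^q Y(s)-s^{q-1}y(0)$, the equation becomes
\begin{equation*}
(s^q I-J)Y(s)=s^{q-1}y(0),\qquad Y(s)=(s^q I-J)^{-1}s^{q-1}y(0).
\end{equation*}
The singularities of $Y(s)$, and hence the asymptotic behaviour of $y(t)$, are determined by the roots $s$ of $\det(s^q I-J)=0$, equivalently $s^q=\lambda_i$ for each eigenvalue $\lambda_i$ of $J$. The inverse transform can be written in closed form via the matrix Mittag-Leffler function $E_{q,1}(Jt^q)$, whose decay is governed precisely by these roots.

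Then I would translate the requirement that all characteristic roots lie in the open left half plane into the stated sector condition. Writing $s=|s|e^{i\theta}$, the identity $s^q=\lambda_i$ yields $|s|=|\lambda_i|^{1/q}$ and $q\theta\equiv\arg(\lambda_i)\pmod{2\pi}$. Taking the principal branch $\theta=\arg(\lambda_i)/q$, the left-half-plane condition $|\theta|>\pi/2$ becomes $|\arg(\lambda_i)|>q\pi/2$. Combined with the standard decay estimate $\|E_{q,1}(Jt^q)\|=O(t^{-q})$ valid under this sector condition, one concludes local asymptotic stability of the linearization, and therefore, by the linearization principle, of the equilibrium $x^*$.

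The main obstacle I expect is the rigorous passage from linear to nonlinear stability. The algebraic identification of the sector $|\arg(\lambda_i)|>q\pi/2$ is routine once the Laplace transform is in place, but controlling the nonlinear remainder requires rewriting the system as a fractional Volterra integral equation, bounding the Mittag-Leffler kernel uniformly on the sector, and closing the estimate with a singular (Henry-type) Gronwall inequality adapted to the kernel $(t-\tau)^{q-1}$. This is the technically delicate step; once it is in hand, the theorem follows from the eigenvalue analysis above.
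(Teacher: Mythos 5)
There is nothing to compare against in the paper itself: Lemma~\ref{lemma1} is quoted with a citation to Matignon's 1996 paper and is never proved in this manuscript, as is standard for such background lemmas. Judged on its own merits, the linear half of your sketch is exactly the canonical (Matignon-style) argument: Laplace transform with the Caputo rule, $Y(s)=(s^qI-J)^{-1}s^{q-1}y(0)$, characteristic roots $s^q=\lambda_i$, and the sector condition. One small point of care in your branch analysis: when $|\arg(\lambda_i)|>q\pi$ the equation $s^q=\lambda_i$ has \emph{no} root on the principal sheet at all (the candidate $\theta=\arg(\lambda_i)/q$ leaves $|\theta|\le\pi$), which is consistent with, indeed favorable to, stability; your phrasing suggests every $\lambda_i$ produces a principal-sheet root. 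With that caveat, the identification $|\arg(\lambda_i)|>q\pi/2 \Leftrightarrow$ all principal-sheet roots in the open left half-plane, together with the $O(t^{-q})$ asymptotics of $E_{q,1}$, correctly settles the linear case.

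The genuine gap is in the step you yourself flag as delicate, and the specific tool you propose there is known not to close. Matignon's theorem is a statement about \emph{linear} systems; the Jacobian form used in this paper tacitly invokes a fractional linearization (Lyapunov first method) principle, and a Henry-type singular Gronwall inequality does not deliver it: the solution operator $E_{q,1}(Jt^q)$ decays only algebraically, like $t^{-q}$, rather than exponentially, and it lacks the semigroup property, so the usual Gronwall bootstrap on the Volterra formulation yields at best boundedness or smallness-dependent estimates, not local asymptotic stability. The rigorous nonlinear version was only established much later (Cong, Doan, Siegmund and Tuan, 2016) via a Lyapunov--Perron operator construction rather than Gronwall estimates. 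So your outline is the right skeleton and correctly isolates the hard step, but as written the final step would fail; either restrict the claim to the linear system (which is all Matignon proves, and all the present paper's characteristic-equation analysis actually uses) or replace the Gronwall closure with a Lyapunov--Perron argument.
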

\begin{lemma}\cite{deng2007stability}\label{lemma2}
Consider the following $n$-dimensional linear fractional-order system with time delay
\begin{equation}\label{deng}
\left\{\begin{array}{c}
D^{q_1} \mathbf{w}_1(t)=\varpi_{11} \mathbf{w}_1\left(t-\tau_{11}\right)+\varpi_{12} \mathbf{w}_2\left(t-\tau_{12}\right)+\cdots+\varpi_{1 n} \mathbf{w}_n\left(t-\tau_{1 n}\right), \\
D^{q_2} \mathbf{w}_2(t)=\varpi_{21} \mathbf{w}_1\left(t-\tau_{21}\right)+\varpi_{22} \mathbf{w}_2\left(t-\tau_{22}\right)+\cdots+\varpi_{2 n} \mathbf{w}_n\left(t-\tau_{2 n}\right), \\
\vdots \\
D^{q_n} \mathbf{w}_n(t)=\varpi_{n 1} \mathbf{w}_1\left(t-\tau_{n 1}\right)+\varpi_{n 2} \mathbf{w}_2\left(t-\tau_{n 2}\right)+\cdots+\varpi_{n n} \mathbf{w}_n\left(t-\tau_{n n}\right),
\end{array}\right.
\end{equation}
where $q_i \in(0,1)$, $i=1,2, \ldots, n$. In system ~\eqref{deng}, define the time delay matrix $\tau=\left(\tau_{i j}\right) \in\left(\mathbb{R}^{+}\right)^{n \times n}$, the coefficient matrix $\varpi=\left(\varpi_{i j}\right)\in\mathbb{R}^{n \times n}$, and then state variables $\mathbf{w}_i(t), \mathbf{w}_i\left(t-\tau_{i j}\right) \in \mathbb{R}$.
Define
$$
\Delta(\lambda)
=\left[\begin{array}{cccc}
\lambda^{q_1}-\varpi_{11} e^{-\lambda \tau_{11}} & -\varpi_{12} e^{-\lambda \tau_{12}} & \cdots & -\varpi_{1 n} e^{-\lambda \tau_{1 n}} \\
-\varpi_{21} e^{-\lambda \tau_{21}} & \lambda^{q_2}-\varpi_{22} e^{-\lambda \tau_{22}} & \cdots & -\varpi_{2 n} e^{-\lambda \tau_{2 n}} \\
\vdots & \vdots & \ddots & \vdots \\
-\varpi_{n 1} e^{-\lambda \tau_{n 1}} & -\varpi_{n 2} e^{-\lambda \tau_{n 2}} & \cdots & \lambda^{q_n}-\varpi_{n n} e^{-\lambda \tau_{n n}}
\end{array}\right] .
$$
Then the zero solution of system~\eqref{deng} is Lyapunov globally asymptotically stable if all the roots of the characteristic equation $\operatorname{det}(\Delta(\lambda))=0$ have negative real parts.
\end{lemma}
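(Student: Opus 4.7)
The plan is to prove this lemma by a Laplace transform argument, following the classical strategy for fractional delay equations. First, I would apply the Laplace transform to each equation of system~\eqref{deng}. Using the formula $\mathcal{L}\{D^{q_i}w_i(t)\} = \lambda^{q_i}W_i(\lambda) - \lambda^{q_i-1}w_i(0)$ for the Caputo derivative with $q_i \in (0,1)$, and the shift identity $\mathcal{L}\{w_j(t-\tau_{ij})\} = e^{-\lambda\tau_{ij}}W_j(\lambda) + e^{-\lambda\tau_{ij}}\int_{-\tau_{ij}}^{0}e^{-\lambda s}\varphi_j(s)\,ds$ for the history function $\varphi_j$ on $[-\tau_{ij},0]$, the transformed system rewrites as a linear algebraic system $\Delta(\lambda)\,\mathbf{W}(\lambda) = \mathbf{b}(\lambda)$, where $\mathbf{b}(\lambda)$ collects all contributions from the initial data. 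By Cramer's rule, each component is given by $W_i(\lambda) = N_i(\lambda)/\det(\Delta(\lambda))$, so the singularities of $W_i(\lambda)$ in the finite complex plane are precisely the roots of $\det(\Delta(\lambda)) = 0$, together with the branch point at the origin coming from the fractional powers $\lambda^{q_i}$.

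Next, I would recover $w_i(t)$ via the Bromwich inversion formula $w_i(t) = \frac{1}{2\pi i}\int_{c-i\infty}^{c+i\infty} e^{\lambda t}W_i(\lambda)\,d\lambda$ with $c>0$ chosen to the right of every singularity. The stability claim then becomes an assertion that this integral tends to $0$ as $t\to\infty$ whenever all roots of $\det(\Delta(\lambda))=0$ lie in the open left half-plane $\{\mathrm{Re}\,\lambda<0\}$. The strategy is to deform the Bromwich contour to a Hankel-type contour that wraps around the branch cut placed along the negative real axis. Because there are no poles in $\mathrm{Re}\,\lambda\ge 0$ by hypothesis, no residue contributions are picked up during the deformation, and one obtains a representation of $w_i(t)$ entirely in terms of integrals along the branch cut (and arcs near infinity).

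The remaining step is to bound each of these pieces. On the large semicircular arcs, one uses a Jordan-type estimate combined with the fact that $|\lambda^{q_i}|$ dominates $|\varpi_{ij}e^{-\lambda\tau_{ij}}|$ as $|\lambda|\to\infty$ in the left half-plane (since $\mathrm{Re}(-\lambda\tau_{ij})\le 0$ there), so $\Delta(\lambda)^{-1}$ is $O(|\lambda|^{-\min_i q_i})$ and the arc contributions vanish as the radius grows. On the two edges of the branch cut, the factor $e^{\lambda t}=e^{-rt}$ (with $\lambda = re^{\pm i\pi}$) provides exponential decay in $t$ for every fixed $r>0$, and a dominated-convergence argument yields the bound $|w_i(t)| = O(t^{-q_i})$ as $t\to\infty$, hence $w_i(t)\to 0$. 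Combining the three components gives global asymptotic stability of the zero solution.

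The main obstacle I anticipate is the careful handling of the transcendental, multi-valued function $\det(\Delta(\lambda))$: because of the mix of fractional powers $\lambda^{q_i}$ and the delay exponentials $e^{-\lambda\tau_{ij}}$, the characteristic equation has infinitely many roots, and one must verify that the hypothesis ``all roots have negative real parts'' is indeed compatible with uniform estimates on $\Delta(\lambda)^{-1}$ along the deformed contour. In particular, showing that $\Delta(\lambda)$ stays bounded away from zero on the contour, uniformly in the contour radius, requires a careful separation of the polynomial-like growth of $\lambda^{q_i}$ from the bounded delay exponentials in the left half-plane, and this is the delicate technical step that legitimates the contour deformation and the final decay estimate.
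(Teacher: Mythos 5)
The paper does not prove this lemma at all: it is quoted verbatim, with citation, from Deng, Li and L\"u \cite{deng2007stability}, so there is no in-paper proof to compare against. The proof in that reference is much shorter than what you propose: after the same first step as yours (Laplace transform of the Caputo derivatives and of the delayed terms, leading to $\Delta(\lambda)\mathbf{W}(\lambda)=\mathbf{b}(\lambda)$), it concludes stability directly from the location of the characteristic roots, essentially via analyticity of $\mathbf{W}$ on the closed right half-plane and a final-value-theorem argument, with no Bromwich inversion, no Hankel contour, and no asymptotic estimates. Your transform-and-Cramer setup therefore coincides with the cited source, but everything from the contour deformation onward is your own, and that part as sketched contains two genuine errors.

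First, a sign error kills your arc estimate: you assert that in the left half-plane $\operatorname{Re}(-\lambda\tau_{ij})\le 0$, so that $|\varpi_{ij}e^{-\lambda\tau_{ij}}|$ stays bounded and $|\lambda^{q_i}|$ dominates. In fact, for $\operatorname{Re}\lambda<0$ one has $\operatorname{Re}(-\lambda\tau_{ij})=-\tau_{ij}\operatorname{Re}\lambda>0$, so the delay exponentials grow like $e^{\tau_{ij}|\operatorname{Re}\lambda|}$, which beats any power $|\lambda|^{q_i}$ on large left-half-plane arcs; the claimed bound $\Delta(\lambda)^{-1}=O(|\lambda|^{-\min_i q_i})$ fails exactly where you invoke it. This is the standard retarded-equation difficulty: $\det\Delta(\lambda)$ is dominated by the fractional powers only on vertical lines and right half-planes, where $|e^{-\lambda\tau_{ij}}|$ is bounded, which is why one shifts the inversion line slightly left (with a small keyhole around the branch point at $0$) rather than wrapping the whole negative real axis. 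Second, your claim that ``no residue contributions are picked up'' is wrong: the hypothesis excludes roots only from $\operatorname{Re}\lambda\ge 0$, so the open left half-plane may contain infinitely many characteristic roots -- some possibly on your branch cut -- and a deformation to a Hankel contour sweeps past them. Each residue decays like $e^{t\operatorname{Re}\lambda_k}$, so they are harmless individually, but you must prove the spectrum is finite in every strip $-M\le\operatorname{Re}\lambda\le 0$ (this does follow from balancing $|\lambda|^{q_i}$ against $e^{-\tau_{ij}\operatorname{Re}\lambda}$, which forces $\operatorname{Re}\lambda_k\to-\infty$ along any sequence with $|\lambda_k|\to\infty$) and control the resulting series. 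Relatedly, ``all roots have negative real parts'' does not by itself give $\sup_k\operatorname{Re}\lambda_k<0$; you correctly flag this as delicate, but the accumulation argument is a step you must actually supply, not just anticipate, before any uniform decay estimate is legitimate.
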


\section{Results}\label{section3}

In this section, we primarily focus on the Turing instability of system~\eqref{sys1}. Utilizing the Turing stability theory for delayed reaction-diffusion models in continuous media, it is crucial to ensure that the endemic equilibrium of system~\eqref{sys1} is locally stable in the absence of diffusion and delay. To achieve this, we first need to investigate the stability of endemic equilibrium in the corresponding ordinary differential model.

\subsection{Stability analysis of the dynamic without diffusion and delay}\label{non-diffusion and non-delay}

The equilibrium of system \eqref{sys1} can be derived as follows
\begin{equation} \label{sys2}
\left\{\begin{aligned}
&\Lambda-\beta S_*I_*-\mu S_*+\nu R_*=0, \\
&\beta S_* I_*-(\gamma+\mu+\alpha) I_*=0, \\
&\gamma I_*-(\mu+\nu) R_*=0.
\end{aligned}\right.
\end{equation}
So we have the endemic equilibrium $E^{*}=\left(S_*, I_*, R_*\right)$, where $S_*=\frac{\gamma+\mu+\alpha}{\beta}$, $I_*=\frac{(\mu+\nu)[\beta\Lambda-\mu(\gamma+\mu+\alpha)]}{\beta(\gamma+\mu+\alpha)\mu+\beta(\mu+\alpha)\nu  }$, $R_*=\frac{\beta\gamma\Lambda-\mu\gamma(\gamma+\mu+\alpha)}{\beta(\gamma+\mu+\alpha)\mu+\beta(\mu+\alpha)\nu }$.
In addition, system \eqref{sys1} has  the  disease-free equilibrium $E^{0}=\left(\frac{\Lambda}{\mu}, 0, 0\right)$.
We mainly study the situation
that diseases appear in the initial state, so in this article we do not consider the disease-free equilibrium $E^{0}$.

\begin{theorem}\label{theorem1}
When $\beta<\beta_c$, the disease-free equilibrium $E^ {0}$ of system \eqref{sys1} is locally asymptotically stable for all $\tau \geqslant 0$.
\end{theorem}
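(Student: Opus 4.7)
The plan is to apply Lemma~\ref{lemma2} to the linearization of system~\eqref{sys1} at $E^{0}=(\Lambda/\mu,0,0)$ and to show that the resulting characteristic equation has no root with $\mathrm{Re}\,\lambda\geq 0$ whenever $\beta<\beta_c:=\mu(\gamma+\mu+\alpha)/\Lambda$, uniformly in $\tau\geq 0$. After the substitution $s_i=S_i-\Lambda/\mu$ and linearization, the delayed cross term $\beta S_i(t-\tau)I_i(t-\tau)$ reduces to $\beta(\Lambda/\mu)I_i(t-\tau)$ because $I_*=0$, so the node-wise equations are block triangular: the $I$-equation is self-contained and drives the $s$- and $R$-equations.

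To diagonalize the diffusion I would expand the perturbations in the orthonormal eigenbasis $\{\phi^{(k)}\}$ of the symmetric graph Laplacian $L=D-A$ with eigenvalues $0=\Lambda_{0}\leq\Lambda_{1}\leq\cdots\leq\Lambda_{N-1}$, reducing the linearization to $N$ decoupled three-dimensional delay fractional systems. The triangular structure makes the determinant prescribed by Lemma~\ref{lemma2} factor as
\begin{equation*}
\det\Delta_k(\lambda)=\bigl(\lambda^{q}+\mu+d_{1}\Lambda_{k}\bigr)\bigl(\lambda^{q}+\mu+\nu\bigr)\Bigl[\lambda^{q}+C_{k}-(\beta\Lambda/\mu)\,e^{-\lambda\tau}\Bigr],
\end{equation*}
where $C_{k}:=\gamma+\mu+\alpha+d_{2}\Lambda_{k}$. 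The first two factors have the form $\lambda^{q}+c=0$ with $c>0$, and on the principal branch with $q\in(0,1]$ such equations admit no root with $|\arg\lambda|\leq q\pi/2$, in particular none with non-negative real part, which by Lemma~\ref{lemma1} already settles them.

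The heart of the proof is the third, delayed factor. At $\tau=0$ it becomes $\lambda^{q}=\beta\Lambda/\mu-C_{k}$, whose right-hand side is negative under $\beta<\beta_{c}$, so stability holds. For $\tau>0$ I would invoke continuity of the spectrum in the delay: instability can only arise when a root first crosses the imaginary axis. Substituting $\lambda=i\omega$ with $\omega>0$, separating real and imaginary parts and squaring-then-adding produces
\begin{equation*}
\omega^{2q}+2C_{k}\omega^{q}\cos(q\pi/2)+C_{k}^{2}=(\beta\Lambda/\mu)^{2}.
\end{equation*}
Because $\cos(q\pi/2)\geq 0$ on $(0,1]$, the left-hand side is bounded below by $C_{k}^{2}\geq(\gamma+\mu+\alpha)^{2}$, while $\beta<\beta_{c}$ forces the right-hand side to be strictly smaller; the cases $\omega<0$ and $\omega=0$ follow by conjugation and direct substitution. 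Applying Lemma~\ref{lemma2} mode by mode then delivers local asymptotic stability of $E^{0}$.

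The main obstacle is closing this crossing argument uniformly in $\tau$; what makes it succeed is the nonnegativity of $\cos(q\pi/2)$ on $(0,1]$, which prevents any cancellation in the modulus equation and converts the condition $\beta<\beta_{c}$ into the clean strict inequality above. A minor bookkeeping concern is that Lemma~\ref{lemma2} as stated accommodates only one delayed term per coefficient, so one should verify directly that the same Laplace-transform computation yields the factorized $\det\Delta_k(\lambda)$ written above when the $I$-equation carries both an instantaneous and a delayed contribution.
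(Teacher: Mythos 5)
Your proposal is correct, and its analytic engine is the same as the paper's: factor the characteristic function at $E^{0}$, dispose of the two trivially stable factors, treat $\tau=0$ via Lemma~\ref{lemma1} with $s=\lambda^q$, then for $\tau>0$ substitute $\lambda=i\omega$ into the delayed factor, separate real and imaginary parts, and square-and-add to obtain exactly the modulus equation $\omega^{2q}+2C\,\omega^{q}\cos(q\pi/2)+C^{2}=(\beta\Lambda/\mu)^{2}$, which has no nonnegative root when $\beta<\beta_c$ because $\cos(q\pi/2)\geq 0$. Where you genuinely differ is in scope and in the closing logic. First, the paper's characteristic matrix at $E^{0}$ silently omits the diffusion terms (no $d_1\Lambda_k$, $d_2\Lambda_k$ appear), even though the theorem is asserted for system~\eqref{sys1} with diffusion; your mode-by-mode decomposition in the Laplacian eigenbasis repairs this, and the repair is essentially free since $\Lambda_k\geq 0$ only increases $C_k=\gamma+\mu+\alpha+d_2\Lambda_k$ and hence strengthens the strict inequality $C_k^2>(\beta\Lambda/\mu)^2$, giving uniformity over all modes at once. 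Second, the paper ends with a muddled invocation of Lemma~\ref{lemma1} (``$|\arg(\omega^q_{1,2,3})|>q\pi/2$''), which strictly speaking applies only to non-delayed systems; your route---Lemma~\ref{lemma2} for the delayed characteristic determinant, combined with continuity of roots in $\tau$ so that instability requires an imaginary-axis crossing, plus the explicit checks at $\omega=0$ and $\omega<0$---is the logically correct way to conclude, and your caveat about Lemma~\ref{lemma2} admitting one delay per coefficient is properly resolved by the direct Laplace-transform derivation, which is precisely how the paper itself handles the analogous issue at $E^{*}$ in Section~\ref{diffusion and delay}. In short: same computation at the core, but your version proves the statement as actually asserted, while the paper proves the diffusion-free case with a looser final step.
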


\begin{proof} The characteristic matrix of system \eqref{sys1} at the disease-free equilibrium $E^0$ is
$$
\Delta(\lambda)=\left(\begin{array}{ccc}
\lambda^q+\mu &\frac{\beta \Lambda e^{-\lambda \tau}}{\mu} &-\nu \\
0& \lambda^q+(\gamma+\mu+\alpha)- \frac{\beta \Lambda e^{-\lambda \tau}}{\mu} &0\\
0 &-\gamma &\lambda^q+(\mu+\nu)
\end{array}\right).
$$
The characteristic equation at $E^0$ is
\begin{equation} \label{det1}
\operatorname{det}(\Delta(\lambda)) =\left(\lambda^q+\mu\right)\left(\lambda^q+\gamma+\mu+\alpha- \frac{\beta \Lambda e^{-\lambda \tau}}{\mu}\right)\left( \lambda^q+\mu+\nu\right)=0.
\end{equation}
When $\tau=0$, let $s=\lambda^q$, Eq. \eqref{det1} can be rewritten as
$$
\left(s+\mu\right)\left(s+\gamma+\mu+\alpha-\frac{\beta\Lambda}{\mu} \right)\left(s+\mu+\nu\right)=0.
$$
Hence, the eigenvalues are $s_1=-\mu$, $s_2=\frac{\Lambda}{\mu}(\beta-\beta_c)$ and $s_3=-\mu-\nu$, where $\beta_c=\frac{\mu(\gamma+\mu+\alpha)}{\Lambda}$. Obviously, $\left|\arg \left(s_{1,2,3}\right)\right|>\frac{q\pi}{2}$ if $\beta<\beta_c$. It follows from Lemma  \ref{lemma1} that the disease-free equilibrium  $E^0$ is locally asymptotically stable if $\beta<\beta_c$.
When $\tau \neq 0$, the eigenvalues in the first and last terms of Eq.~\eqref{det1} are obviously negative. Therefore, we only need to analyze the second term of Eq.~\eqref{det1},
\begin{equation}\label{det2}
\lambda^q+\gamma+\mu+\alpha- \frac{\beta \Lambda e^{-\lambda \tau}}{\mu}=0 .
\end{equation}
Substituting $\lambda=i \omega$, $(w>0)$, into Eq.~\eqref{det2} we have
$$
(i \omega)^q+\gamma+\mu+\alpha- \frac{\beta \Lambda e^{-i \omega \tau}}{\mu}=0,
$$
which is equivalent to
\begin{equation}\label{det3}
 \omega^q\left(\cos \left(\frac{q \pi}{2}\right) +i \sin\left( \frac{q \pi}{2}\right) \right)  +\gamma+\mu+\alpha- \frac{\beta \Lambda}{\mu}\left(\cos (\omega\tau)-i \sin (\omega\tau)\right)=0.
\end{equation}
Separating the real and imaginary parts of Eq.~\eqref{det3}, one obtains
\begin{equation} \label{det4}
\left\{\begin{aligned}
&\omega^q\cos \left(\frac{q \pi}{2}\right)+\gamma+\mu+\alpha= \frac{\beta \Lambda}{\mu} \cos (\omega\tau), \\
&\omega^q \sin \left(\frac{q \pi}{2}\right)=- \frac{\beta \Lambda}{\mu} \sin(\omega\tau).
\end{aligned}\right.
\end{equation}
By adding the squares of the left and right sides of the two equations in Eq.~\eqref{det4}, we get
\begin{equation}\label{det5}
\omega^{2 q} +2(\gamma+\mu+\alpha) \cos \left(\frac{q \pi}{2}\right) \omega^q  +\left(\gamma+\mu+\alpha+\frac{\beta \Lambda}{\mu} \right)\left(\gamma+\mu+\alpha-\frac{\beta \Lambda}{\mu} \right)=0 .
\end{equation}
If $\beta<\beta_c$, Eq.~\eqref{det5} has no positive roots. Thus, Eq.~\eqref{det5} has no pure imaginary roots. Hence, one obtains $\left|\arg \left(\omega_{1,2,3}^q\right)\right|>\frac{q \pi}{2}$. According to Lemma \ref{lemma1}, $E^0$ is locally asymptotically stable.
\end{proof}
As the diffusion term analysis in the subsequent section involves the stability of the endemic equilibrium $E^*$  when $\tau\neq0$, here we will limit our discussion to the stability of the endemic equilibrium $E^*$  when $\tau=0$. The characteristic matrix of system \eqref{sys1} at the endemic equilibrium $E^*$ if $\tau=0$ is
$$
\Delta(\lambda)=\left(\begin{array}{ccc}
\lambda^q+\mu+\beta  I_* & \beta S_* &-\nu \\
-\beta {I}_*  & \lambda^q+(\gamma+\mu+\alpha)-\beta  {S}_* &0\\
0 &-\gamma &\lambda^q+(\mu+\nu)
\end{array}\right).
$$
The characteristic equation at $E^*$ is
\begin{equation}\label{det6}
\begin{aligned}
\operatorname{det}(\Delta(\lambda))=&\left(\lambda^q+\mu+\beta  I_*\right)\left(\lambda^q+\gamma+\mu+\alpha- \beta  S_*\right)\left( \lambda^q+\mu+\nu\right)\\
&+\beta  I_*\left[ \beta  S_*(\lambda^q+\mu+\nu)-\gamma\nu\right].\\
\end{aligned}
\end{equation}
Setting $s=\lambda^q$, Eq.~\eqref{det6}
can be rewritten as
\begin{align*}
s^3+As^2+Bs+C=0,
\end{align*}
where $A=2\mu+\nu+\beta I_*$, $B=\mu(\mu+\nu)+(2\mu+\alpha+\nu+\gamma)\beta I_*$, $C=[(\mu+\alpha)(\mu+\nu)+\gamma\mu]\beta I_*$.
According to Routh-Hurwitz criterion, when $A>0$, $B>0$, $C>0$, $AB>C$, $E^*$ is locally asymptotically
stable. Obviously, $A>0$, $B>0$, $C>0$, $AB>C$, if $I^*>0$. That is to say, $E^*$ is locally asymptotically stable if $\beta>\beta_c$.
\subsection{Turing instability induced by delay}\label{diffusion and delay}

In Sec.~\ref{non-diffusion and non-delay}, we have given the conditions for the stability of endemic disease equilibrium, and on this basis, this section plans to study the impact of delay on the stability of system~\eqref {sys1}. Therefore, the linearization form of the fractional-order system~\eqref {sys1} with a delay at the endemic disease equilibrium $E^*$ is rewritten as follows
\begin{equation} \label{sys3}
\left\{\begin{aligned}
&D^{q} S_i(t)=-\mu S_i(t)+\nu R_i(t)-\beta {I}_*S_i(t-\tau)-\beta {S}_*I_i(t-\tau)+d_1 \sum_{j=1}^n L_{i j} S_j, \\
&D^{q} I_i(t)=-(\gamma+\mu+\alpha) I_i(t)+\beta {I}_*S_i(t-\tau)+\beta {S}_*I_i(t-\tau)+d_2 \sum_{j=1}^n L_{i j}  I_j, \\
&D^{q} R_i(t)=\gamma I_i(t)-(\mu+\nu) R_i(t),
\end{aligned}\right.
\end{equation}
where $L_{ij}$ are the components of the graph Laplacian $L$ corresponding to the diffusion graph with components $A_{ij}$, i.e., $L=K-A$, where $K$ is the diagonal matrix with the degrees of the nodes.
By applying the Laplace transform to both sides of system \eqref{sys3}, we obtain the following:
\begin{equation}
\left\{\begin{aligned}\label{sys4}
\lambda^q X_i-\lambda^{q-1} u_i(0)=&-\mu X_i+\nu Z_i-\beta  I_*e^{-\lambda \tau}\left(X_i+\int_{-\tau}^0 e^{-\lambda t} u_i(t) d t\right) \\
&-\beta S_* e^{-\lambda \tau}\left(Y_i+\int_{-\tau}^0 e^{-\lambda t} v_i(t) d t\right)+d_1 \sum_{j=1}^n L_{i j} X_j, \\
\lambda^q Y_i-\lambda^{q-1} v_i(0)=&-(\gamma+\mu+\alpha) Y_i+\beta {I}_*  e^{-\lambda \tau}\left(X_i+\int_{-\tau}^0 e^{-\lambda t} u_i(t) d t\right) \\
&+\beta  {S}_* e^{-\lambda \tau}\left(Y_i+\int_{-\tau}^0 e^{-\lambda t} v_i(t) d t\right)+d_2 \sum_{j=1}^n L_{i j} Y_j,\\
\lambda^q Z_i-\lambda^{q-1} w_i(0)=&\gamma Y_i -(\mu+\nu)Z_i,
\end{aligned}\right.
\end{equation}
where $X_i, Y_i, Z_i$ is the Laplace transform of $S_i, I_i, R_i$, respectively.
System \eqref{sys4} can be reformulated in the following matrix form:
\begin{equation} \label{sys5}
\left(A-D L_1\right) X=b,
\end{equation}
where
$$
A=\left(\begin{array}{ccc}
\lambda^q+\mu+\beta  I_*e^{-\lambda \tau} & \beta S_* e^{-\lambda \tau}&-\nu \\
-\beta {I}_*  e^{-\lambda \tau}& \lambda^q+(\gamma+\mu+\alpha)-\beta  {S}_* e^{-\lambda \tau}&0\\
0 &-\gamma &\lambda^q+(\mu+\nu)
\end{array}\right) \otimes E ,
$$
\begin{align*}
D=\left(\begin{array}{lll}d_1 & 0 &0 \\ 0 & d_2 & 0\\ 0 & 0& 0 \end{array}\right) \otimes E,
\end{align*}
\begin{align*}
X=\left(X_1, X_2, \ldots, X_n, Y_1, Y_2, \ldots, Y_n, Z_1, Z_2, \ldots, Z_n\right)^T,\\
b=\left(b_{11}, b_{12}, \ldots b_{1 n}, b_{21}, b_{22}, \ldots b_{2 n}, b_{31}, b_{32}, \ldots b_{3 n}\right)^T,
\end{align*}
$$
\begin{gathered}
\left(\begin{array}{l}
b_{1 i} \\
b_{2 i} \\
b_{3 i}
\end{array}\right)=\left(\begin{array}{c}
\lambda^{q-1} u_i(0)-\beta  I_*e^{-\lambda \tau}  \int_{-\tau}^0 e^{-\lambda t} u_i(t) d t-\beta  S_*e^{-\lambda \tau} \int_{-\tau}^0 e^{-\lambda t} v_i(t) d t \\
\lambda^{q-1} v_i(0)+\beta  I_*e^{-\lambda \tau} \int_{-\tau}^0 e^{-\lambda t} u_i(t) d t+\beta  I_*e^{-\lambda \tau} \int_{-\tau}^0 e^{-\lambda t} v_i(t) d t \\
\lambda^{q-1} w_i(0)
\end{array}\right), \\
L_1=\left(\begin{array}{ccc}
L & 0 &0\\
0 & L &0\\
0 & 0 &L
\end{array}\right),
\end{gathered}
$$
matrix $E$ is an $n\times n$ identity matrix and $\otimes$ is kronecker product. $A-D L_1$ represents the characteristic matrix of system \eqref{sys3}.

Since the Laplacian matrix is a real symmetric matrix, it can be diagonalized. An orthonormal basis $\phi_i$ makes the following equation hold:
$$
L_1 \phi=\Lambda \phi\,,
$$
where $\Lambda_i$ is the eigenvalue of $L,  \phi=\left(\phi_1, \ldots, \phi_n, \phi_1, \ldots, \phi_n, \phi_1, \ldots, \phi_n\right)^T$ is an invertible matrix, $\phi_i$ is the eigenvector of $\Lambda_i$, and
$$
\begin{aligned}
\Lambda &=\left(\begin{array}{lll}
\Lambda^{(1)} & 0 & 0 \\
0 & \Lambda^{(1)}& 0\\
0 & 0 &\Lambda^{(1)}
\end{array}\right), \\
\Lambda^{(1)} &=\left(\begin{array}{cccc}
\Lambda_1 & 0 & \ldots & 0 \\
0 & \Lambda_2 & \ldots & 0 \\
0 & \ldots & \ldots & 0 \\
0 & \ldots & 0 & \Lambda_n
\end{array}\right) .
\end{aligned}
$$
Supposing $X=\phi Y$, system \eqref{sys5} can be rewritten as
$$
A \phi Y-D L_1 \phi Y=b \Rightarrow A \phi Y-D \Lambda \phi Y=b \Rightarrow(A-D \Lambda) X=b .
$$
Thus, system \eqref{sys5} can be reduced to
$$
\left(\begin{array}{ccc}
\lambda^q+\mu+\beta  I_*e^{-\lambda \tau}-d_1 \Lambda_i & \beta S_* e^{-\lambda \tau}&-\nu \\
-\beta {I}_*  e^{-\lambda \tau}& \lambda^q+(\gamma+\mu+\alpha)-\beta  {S}_* e^{-\lambda \tau}-d_2 \Lambda_i&0\\
0 &-\gamma &\lambda^q+(\mu+\nu)
\end{array}\right)\left(\begin{array}{c}
X_i \\
Y_i \\
Z_i
\end{array}\right)=\left(\begin{array}{c}
b_{1 i} \\
b_{2 i} \\
b_{3 i}
\end{array}\right) .
$$
It is widely recognized that initial values do not affect the stability of linear fractional differential systems. Assuming all initial values are zero, the stability of system \eqref{sys1} can be determined by:
$$
\left(\begin{array}{ccc}
\lambda^q+\mu+\beta  I_*e^{-\lambda \tau}-d_1 \Lambda_i & \beta S_* e^{-\lambda \tau}&-\nu \\
-\beta {I}_*  e^{-\lambda \tau}& \lambda^q+(\gamma+\mu+\alpha)-\beta  {S}_* e^{-\lambda \tau}-d_2 \Lambda_i&0\\
0 &-\gamma &\lambda^q+(\mu+\nu)
\end{array}\right)\left(\begin{array}{l}
X_i \\
Y_i \\
Z_i
\end{array}\right)=\left(\begin{array}{l}
0 \\
0 \\
0
\end{array}\right) .
$$
Therefore, the stability of system \eqref{sys1} depends on the following characteristic equation
$$
\left|\begin{array}{ccc}
\lambda^q+\mu+\beta  I_*e^{-\lambda \tau}-d_1 \Lambda_i & \beta S_* e^{-\lambda \tau}&-\nu \\
-\beta {I}_*  e^{-\lambda \tau}& \lambda^q+(\gamma+\mu+\alpha)-\beta  {S}_* e^{-\lambda \tau}-d_2 \Lambda_i&0\\
0 &-\gamma &\lambda^q+(\mu+\nu)
\end{array}\right|=0,
$$
namely,
\begin{equation} \label{sys6}
P_1(\lambda)+P_2(\lambda) e^{-\lambda \tau}=0,
\end{equation}
where
$$
\begin{aligned}
P_1(\lambda)=
&\lambda^{3 q}+(\gamma+3 \mu+\alpha+\nu) \lambda^{2 q}+\left[(\gamma+\mu+\alpha) \mu +(\mu+\nu)(\gamma+2 \mu+\alpha)\right] \lambda^q\\
&-\left(d_1+d_2\right) \Lambda_i \lambda^{2 q}-\left[d_1(\gamma+\mu+\alpha)+\mu d_2+(\mu+\nu)\left(d_1+d_2\right)\right] \Lambda_i \lambda^q \\
&+d_1 d_2 \Lambda_i^2 \lambda^q+(\mu+\nu)(\gamma+\mu+\alpha) \mu\\
&-(\mu+\nu)\left[d_1(\gamma+\mu+\alpha)+\mu d_2\right] \Lambda_i+d_1 d_2(\mu+\nu) \Lambda_i^2,\\
P_2(\lambda)=&\beta\left(I_*-S_*\right) \lambda^{2 q} +\left[\left(\gamma+\mu+\alpha\right) \beta I_*+\beta\left(I_*-S_*\right)(\mu+\nu)-\mu \beta S_*\right] \lambda^q \\
&+\left(d_1 \beta S_*-d_2 \beta I_*\right) \Lambda_i \lambda^q +(\mu+\nu)\left(d_1 \beta S_*-d_2 \beta I_*\right) \Lambda_i\\
&   +\left[(\gamma+\mu+\alpha) \beta I_*-\mu \beta S_*\right](\mu+\nu)-\beta I_* \gamma \nu.\\
\end{aligned}
$$
We substitute $\lambda=i \omega=\omega\left(\cos \left(\frac{\pi}{2}\right)+i \sin \left(\frac{\pi}{2}\right)\right)=\omega e^{i \frac{\pi}{2}}$ into system \eqref{sys6}, and have
\begin{equation} \label{sys7}
\left(A_1+i B_1\right)+\left(A_2+i B_2\right)(\cos (\omega \tau)-i \sin (\omega \tau))=0,
\end{equation}
where
$$
\begin{aligned}
A_1=&\omega^{3 q}\cos (3 / 2 q \pi)+(\gamma+3 \mu+\alpha+\nu) \omega^{2 q}\cos (q \pi)\\
&+\left[(\gamma+\mu+\alpha) \mu +(\mu+\nu)(\gamma+2 \mu+\alpha)\right] \omega^{ q}\cos (1 / 2 q \pi)\\
&-\left(d_1+d_2\right) \Lambda_i \omega^{2 q}\cos (q \pi)+d_1 d_2 \Lambda_i^2 \omega^{ q}\cos (1 / 2 q \pi)\\
&-\left[d_1(\gamma+\mu+\alpha)+\mu d_2+(\mu+\nu)\left(d_1+d_2\right)\right] \Lambda_i \omega^{ q}\cos (1 / 2 q \pi) \\
&+(\mu+\nu)(\gamma+\mu+\alpha) \mu-(\mu+\nu)\left[d_1(\gamma+\mu+\alpha)+\mu d_2\right] \Lambda_i+d_1 d_2(\mu+\nu) \Lambda_i^2,\\
B_1=&\omega^{3 q}\sin (3 / 2 q \pi)+(\gamma+3 \mu+\alpha+\nu) \omega^{2 q}\sin (q \pi)\\
&+\left[(\gamma+\mu+\alpha) \mu +(\mu+\nu)(\gamma+2 \mu+\alpha)\right] \omega^{ q}\sin (1 / 2 q \pi) \\
&-\left(d_1+d_2\right) \Lambda_i \omega^{2 q}\sin (q \pi)+d_1 d_2 \Lambda_i^2 \omega^{ q}\sin (1 / 2 q \pi)\\
&-\left[d_1(\gamma+\mu+\alpha)+\mu d_2+(\mu+\nu)\left(d_1+d_2\right)\right] \Lambda_i \omega^{ q}\sin (1 / 2 q \pi),\\
A_2=&\beta\left(I_*-S_*\right) \omega^{2 q}\cos (q \pi) +\left(d_1 \beta S_*-d_2 \beta I_*\right) \Lambda_i \omega^{ q}\cos (1 / 2 q \pi)\\
&+\left[\left(\gamma+\mu+\alpha\right) \beta I_*+\beta\left(I_*-S_*\right)(\mu+\nu)-\mu \beta S_*\right] \omega^{ q}\cos (1 / 2 q \pi) \\
& +(\mu+\nu)\left(d_1 \beta S_*-d_2 \beta I_*\right) \Lambda_i  +\left[(\gamma+\mu+\alpha) \beta I_*-\mu \beta S_*\right](\mu+\nu)-\beta I_* \gamma \nu, \\
B_2=&\beta\left(I_*-S_*\right) \omega^{2 q}\sin (q \pi)+\left(d_1 \beta S_*-d_2 \beta I_*\right) \Lambda_i \omega^{ q}\sin (1 / 2 q \pi)\\
 &+\left[\left(\gamma+\mu+\alpha\right) \beta I_*+\beta\left(I_*-S_*\right)(\mu+\nu)-\mu \beta S_*\right] \omega^{ q}\sin (1 / 2 q \pi).
\end{aligned}
$$
Separating the real and imaginary parts of Eq.~\eqref{sys7}, one obtains
\begin{equation*}
\left\{\begin{aligned}
&A_2 \cos (\omega \tau)+B_2 \sin (\omega \tau)=-A_1, \\
&-A_2 \sin (\omega \tau)+B_2 \cos (\omega \tau)=-B_1,
\end{aligned}\right.
\end{equation*}
then,
\begin{equation} \label{det7}
\left\{\begin{aligned}
&(A_2^2 + B_2^2) \cos (\omega \tau) =-B_1B_2-A_1A_2, \\
&(A_2^2 + B_2^2) \sin  (\omega \tau) =B_1A_2-A_1B_2.
\end{aligned}\right.
\end{equation}
By adding the squares of the left and right sides of the two equations in Eq.~\eqref{det7}, we get
\begin{equation} \label{sys8}
(A_2^2 + B_2^2)^2=(B_1B_2+A_1A_2)^2+(B_1A_2-A_1B_2)^2,
\end{equation}
where $\omega$ can be solved from system \eqref{det7}.
The critical value of $\tau_c$ is
$$
\tau_c=\min _{i, k}\left\{\frac{1}{\omega_k} \arccos \left(\frac{-B_1B_2-A_1A_2}{A_2{ }^2+B_2{ }^2}\right)+\frac{2 \pi}{\omega_k}\right\},
$$
where index $i$ refers to the $i$th node, and $\omega_k$ represent all the positive roots of system \eqref{det7}.
Also, we have the transversality condition
$$
\frac{d \lambda}{d \tau}=\frac{\lambda P_2(\lambda) e^{-\lambda \tau}}{P_1^{\prime}(\lambda)+P_2^{\prime}(\lambda) e^{-\lambda \tau}-\tau P_2(\lambda) e^{-\lambda \tau}}=\frac{M}{N},
$$
and
$$
\operatorname{Re}\left[\frac{d \lambda}{d \tau}\right]=\frac{M_1 N_1+M_2 N_2}{N_1^2+N_2^2},
$$
where
$$
\begin{aligned}
M_1=&- \beta\left(I_*-S_*\right)\omega^{2q+1} \sin (\pi q) \cos (\omega \tau)+\beta\left(I_*-S_*\right)\omega^{2q+1} \cos (\pi q) \sin (\omega \tau) \\
& -\left[\left(\gamma+\mu+\alpha\right) \beta I_*+\beta\left(I_*-S_*\right)(\mu+\nu)-\mu \beta S_*+\left(d_1 \beta S_*-d_2 \beta I_*\right) \Lambda_i\right]  \omega^{q+1} \sin (1 / 2\pi q)  \cos (\omega \tau)\\
& +\left[\left(\gamma+\mu+\alpha\right) \beta I_*+\beta\left(I_*-S_*\right)(\mu+\nu)-\mu \beta S_*+\left(d_1 \beta S_*-d_2 \beta I_*\right) \Lambda_i\right] \omega^{q+1} \cos (1 / 2\pi q) \sin (\omega \tau) \\
&+\left\{ (\mu+\nu)\left(d_1 \beta S_*-d_2 \beta I_*\right) \Lambda_i +\left[(\gamma+\mu+\alpha) \beta I_*-\mu \beta S_*\right](\mu+\nu)-\beta I_* \gamma \nu\right\} \omega \sin (\omega \tau),\\
M_2=&\beta\left(I_*-S_*\right)\omega^{2q+1} \cos (\pi q) \cos (\omega \tau)+\beta\left(I_*-S_*\right)\omega^{2q+1} \sin (\pi q) \sin (\omega \tau)\\
& +\left[\left(\gamma+\mu+\alpha\right) \beta I_*+\beta\left(I_*-S_*\right)(\mu+\nu)-\mu \beta S_*+\left(d_1 \beta S_*-d_2 \beta I_*\right) \Lambda_i\right] \omega^{q+1} \cos (1 / 2\pi q) \cos (\omega \tau) \\
&+\left\{ (\mu+\nu)\left(d_1 \beta S_*-d_2 \beta I_*\right) \Lambda_i +\left[(\gamma+\mu+\alpha) \beta I_*-\mu \beta S_*\right](\mu+\nu)-\beta I_* \gamma \nu\right\} \omega \cos (\omega \tau)\\
& +\left[\left(\gamma+\mu+\alpha\right) \beta I_*+\beta\left(I_*-S_*\right)(\mu+\nu)+\mu \beta S_*+\left(d_1 \beta S_*-d_2 \beta I_*\right) \Lambda_i\right]  \omega^{q+1} \sin (1 / 2\pi q) \sin (\omega \tau),\\
N_1=&3 \alpha \omega^{3q-1}\sin (3/2\pi q)+2 \alpha[(\gamma+3 \mu+\alpha+\nu)-\left(d_1+d_2\right)\Lambda_i  ]\omega^{2q-1} \sin (\pi \alpha)\\
&+ \alpha\{\left[(\gamma+\mu+\alpha) \mu +(\mu+\nu)(\gamma+2 \mu+\alpha)\right]-\left[d_1(\gamma+\mu+\alpha)+\mu d_2+(\mu+\nu)\left(d_1+d_2\right)\right] \Lambda_i +d_1 d_2 \Lambda_i^2 \}\\
&\times\omega^{q-1}\sin (1/2\pi q)+2 \alpha \beta\left(I_*-S_*\right)\omega^{2q-1}( \sin (\pi q)\cos(\omega \tau)-\cos (\pi q)\sin (\omega \tau))\\
&+ \alpha\left[\left(\gamma+\mu+\alpha\right) \beta I_*+\beta\left(I_*-S_*\right)(\mu+\nu)-\mu \beta S_*+\left(d_1 \beta S_*-d_2 \beta I_*\right) \Lambda_i\right] \\
&\times\omega^{q-1}( \sin (1/2\pi q)\cos(\omega \tau)-\cos (1/2\pi q)\sin (\omega \tau))\\
&-\tau \beta\left(I_*-S_*\right) \omega^{2q}( \cos (\pi q)\cos(\omega \tau)+\sin (\pi q)\sin (\omega \tau))\\
&-\tau\left[\left(\gamma+\mu+\alpha\right) \beta I_*+\beta\left(I_*-S_*\right)(\mu+\nu)-\mu \beta S_*+\left(d_1 \beta S_*-d_2 \beta I_*\right) \Lambda_i\right]\\
&\times\omega^{q}( \cos (1/2\pi q)\cos(\omega \tau)+\sin (1/2\pi q)\sin (\omega \tau))\\
& +\{\tau\beta I_* \gamma \nu-\tau(\mu+\nu)\left(d_1 \beta S_*-d_2 \beta I_*\right) \Lambda_i-\tau\left[(\gamma+\mu+\alpha) \beta I_*-\mu \beta S_*\right](\mu+\nu)\} \cos(\omega \tau),
\end{aligned}
$$
$$
\begin{aligned}
N_2=&-3 \alpha \omega^{3q-1}\cos (3/2\pi q)-2 \alpha[(\gamma+3 \mu+\alpha+\nu)-\left(d_1+d_2\right)\Lambda_i  ]\omega^{2q-1}\cos (\pi q)\\
&- \alpha\{\left[(\gamma+\mu+\alpha) \mu +(\mu+\nu)(\gamma+2 \mu+\alpha)\right]-\left[d_1(\gamma+\mu+\alpha)+\mu d_2+(\mu+\nu)\left(d_1+d_2\right)\right] \Lambda_i +d_1 d_2 \Lambda_i^2 \}\\
&\times\omega^{q-1}\cos (1/2\pi q)-2 \alpha \beta\left(I_*-S_*\right)\omega^{2q-1}( \cos(\pi q)\cos (\omega \tau)+\sin(\pi q)\sin (\omega \tau))\\
&- \alpha\left[\left(\gamma+\mu+\alpha\right) \beta I_*+\beta\left(I_*-S_*\right)(\mu+\nu)-\mu \beta S_*+\left(d_1 \beta S_*-d_2 \beta I_*\right) \Lambda_i\right] \\
&\times\omega^{q-1}(\cos(1/2\pi q)\cos (\omega \tau)+\sin(1/2\pi q)\sin (\omega \tau))\\
&-\tau \beta\left(I_*-S_*\right) \omega^{2q}( \sin(\pi q)\cos (\omega \tau)-\cos(\pi q)\sin (\omega \tau))\\
&-\tau\left[\left(\gamma+\mu+\alpha\right) \beta I_*+\beta\left(I_*-S_*\right)(\mu+\nu)-\mu \beta S_*+\left(d_1 \beta S_*-d_2 \beta I_*\right) \Lambda_i\right]\\
&\times\omega^{q}( \sin(1/2\pi q)\cos (\omega \tau)-\cos(1/2\pi q)\sin (\omega \tau))\\
& -\{\tau\beta I_* \gamma \nu-\tau(\mu+\nu)\left(d_1 \beta S_*-d_2 \beta I_*\right) \Lambda_i-\tau\left[(\gamma+\mu+\alpha) \beta I_*-\mu \beta S_*\right](\mu+\nu)\}\sin (\omega \tau).\\
\end{aligned}
$$
Furthermore,
$$
\begin{aligned}
&\left.M\left(\omega i\right)\right|_{\tau=\tau_c}=M_1+iM_2, \\
&\left.N\left(\omega i\right)\right|_{\tau=\tau_c}=N_1+iN_2,
\end{aligned}
$$
where $M_1, M_2, N_1, N_2$ are the real and imaginary parts of $M(\lambda), N(\lambda)$. We suppose $\tau$ is the control parameter and through simple calculations, it can be concluded that
$$
\operatorname{Re}\left[\frac{d \lambda}{d \tau}\right]_{\tau=\tau_c, \omega=\omega_c}=\frac{M_1 N_1+M_2 N_2}{N_1^2+N_2^2} \neq 0,
$$
where $\omega_c$ is the corresponding frequency of $\tau_c$. Thus, based on the above analysis and Hopf bifurcation theory, one has the following results.
\begin{theorem}\label{theorem} Turing instability induced by delay.
\begin{itemize}
    \item If $\operatorname{Re}\left[\frac{d \lambda}{d \tau}\right]_{\tau=\tau_c, \omega=\omega_c}>0$, Turing instability occurs in system \eqref{sys1} when $\tau>\tau_c$.
    \item If $\operatorname{Re}\left[\frac{d \lambda}{d \tau}\right]_{\tau=\tau_c, \omega=\omega_c}<0$, Turing instability occurs in system \eqref{sys1} when $\tau<\tau_c$.
\end{itemize}
\end{theorem}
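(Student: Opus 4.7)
My plan is to invoke the Hopf-type bifurcation framework for fractional delay systems, building directly on the characteristic equation \eqref{sys6}. By Lemma~\ref{lemma1}, local asymptotic stability of the endemic equilibrium $E^*$ of the full coupled system~\eqref{sys3} is equivalent to every root $\lambda$ of $P_1(\lambda) + P_2(\lambda) e^{-\lambda \tau} = 0$ satisfying $|\arg\lambda| > q\pi/2$, for every Laplacian eigenvalue $\Lambda_i$. At $\tau = 0$, Section~\ref{non-diffusion and non-delay} already establishes stability in the non-diffusive case when $\beta > \beta_c$, and the diffusion contribution $\Lambda_i \leq 0$ only reinforces the Routh--Hurwitz inequalities, so the baseline system is stable. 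The task is therefore to detect the smallest delay $\tau_c$ at which a root first crosses the stability boundary, and then to determine the direction of that crossing.

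To locate $\tau_c$, I substitute $\lambda = i\omega$ with $\omega > 0$ into \eqref{sys6} and expand $\lambda^{kq} = \omega^{kq}\bigl(\cos(kq\pi/2) + i\sin(kq\pi/2)\bigr)$ for $k = 1,2,3$. This yields the real and imaginary decomposition $(A_1 + iB_1) + (A_2 + iB_2)(\cos(\omega\tau) - i\sin(\omega\tau)) = 0$ of \eqref{sys7}. Collecting real and imaginary parts produces the linear system \eqref{det7} in $\cos(\omega\tau)$ and $\sin(\omega\tau)$; squaring and adding eliminates the trigonometric unknowns to give the polynomial condition \eqref{sys8}, whose positive roots $\omega_k$ are the candidate crossing frequencies. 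For each $\omega_k$ and each mode $\Lambda_i$, inverting \eqref{det7} with the arccos branch formula yields a discrete set of admissible delays, and taking the minimum over all $(i,k)$ defines $\tau_c$ with associated frequency $\omega_c$.

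Next I would verify the transversality condition by implicitly differentiating \eqref{sys6} in $\tau$, giving
\begin{equation*}
\frac{d\lambda}{d\tau} = \frac{\lambda P_2(\lambda) e^{-\lambda\tau}}{P_1'(\lambda) + P_2'(\lambda) e^{-\lambda\tau} - \tau P_2(\lambda) e^{-\lambda\tau}}.
\end{equation*}
Evaluation at $(\tau_c,\omega_c)$ using the expressions for $M_1, M_2, N_1, N_2$ displayed above (with $\cos(\omega\tau), \sin(\omega\tau)$ already fixed by \eqref{det7}) produces $\operatorname{Re}[d\lambda/d\tau]_{\tau_c,\omega_c} = (M_1 N_1 + M_2 N_2)/(N_1^2 + N_2^2)$, whose denominator is nonzero. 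If this real part is positive, then as $\tau$ crosses $\tau_c$ from below an eigenvalue migrates from the stability cone $|\arg\lambda| > q\pi/2$ across the imaginary axis, so by continuity of the root locus Turing instability appears for $\tau > \tau_c$; if negative, the crossing is reversed and instability occurs for $\tau < \tau_c$. Combined with Lemma~\ref{lemma1}, this yields the two cases of the theorem.

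The main obstacle I anticipate is reconciling the choice $\lambda = i\omega$ with the fractional stability boundary $|\arg\lambda| = q\pi/2$, because for $q < 1$ the imaginary axis sits strictly inside the stability cone. One must argue, either by continuity of the root locus (a root in the open left half-plane cannot enter the instability wedge without first meeting the imaginary axis in a suitable sense after the substitution $s = \lambda^q$) or by reparameterising directly in $s$ and invoking classical Hopf theory, that the critical delay obtained from the pure-imaginary ansatz is indeed the stability threshold. The algebra producing $A_j, B_j, M_j, N_j$ is heavy but routine; the delicate step is the justification that the sign of $\operatorname{Re}[d\lambda/d\tau]$ at this purely imaginary crossing genuinely controls the direction in which the stability boundary is breached.
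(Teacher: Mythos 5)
Your route is computationally identical to the paper's: the same characteristic equation \eqref{sys6} (obtained there by Laplace-transforming the linearization \eqref{sys3} and diagonalizing the Laplacian mode by mode), the same substitution $\lambda=i\omega$ yielding \eqref{sys7} and \eqref{det7}, the same squaring step \eqref{sys8} to locate the crossing frequencies $\omega_k$, the same definition of $\tau_c$ as a minimum over modes $\Lambda_i$ and roots $\omega_k$, and the same implicit differentiation giving $\operatorname{Re}\left[\frac{d\lambda}{d\tau}\right]=\frac{M_1N_1+M_2N_2}{N_1^2+N_2^2}$, concluded via Hopf bifurcation theory.

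There is, however, one genuine gap, and it is exactly the step you yourself flag as ``delicate'' at the end. You invoke Lemma~\ref{lemma1} to assert that stability of $E^*$ is equivalent to every root of $P_1(\lambda)+P_2(\lambda)e^{-\lambda\tau}=0$ lying outside the cone $|\arg\lambda|\le q\pi/2$, and then worry about reconciling the ansatz $\lambda=i\omega$ with that cone. Lemma~\ref{lemma1} does not apply here: it is a statement about eigenvalues of the Jacobian of a \emph{non-delayed} fractional system, i.e., about $s=\lambda^q$, not about roots of a transcendental characteristic function mixing $\lambda^{kq}$ with $e^{-\lambda\tau}$. The correct tool --- the one the paper quotes in the Preliminaries precisely for this purpose --- is Lemma~\ref{lemma2} (Deng--Li--L\"u): for a linear fractional system with delays, asymptotic stability holds if all roots of $\det(\Delta(\lambda))=0$ have negative real parts in the Laplace variable $\lambda$. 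Under that criterion the imaginary axis \emph{is} the stability boundary, the pure-imaginary ansatz needs no reconciliation with the Matignon cone, and the sign of $\operatorname{Re}\left[\frac{d\lambda}{d\tau}\right]$ at $(\tau_c,\omega_c)$ directly determines the direction of the crossing, giving both bullet points of the theorem. Your fallback of reparameterising in $s=\lambda^q$ and invoking classical Hopf theory would be actively problematic, since $e^{-\lambda\tau}=e^{-s^{1/q}\tau}$ is multivalued; with Lemma~\ref{lemma2} in hand it is also unnecessary. (A minor parity note: your claim that $\Lambda_i\le 0$ merely reinforces the Routh--Hurwitz inequalities at $\tau=0$ is asserted without computation, but the paper likewise verifies only the non-diffusive baseline, so this is not a point where you fall short of the source.)
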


\section{Numerical analysis}\label{section4}

In this section, we aim to design several numerical experiments to validate the theoretical analysis. First, we calculate the stability conditions of the endemic equilibrium $E^*$ without the diffusion term (i.e., $\beta>\frac{\mu(\gamma+\mu+\alpha)}{\Lambda}$), meaning that we need to ensure the stability of system~\eqref{sys1} without delay, and then investigate the effect of delay on the system. Consequently, we set the parameter values as $\Lambda=5$, $\mu=0.035$, $\nu=0.05$, $\gamma=0.2$, $\alpha=0.01$, $\beta=0.006$, $d_1=0$, $d_2=0$, and $q=1$. This ensures the stability of the non-delay and non-diffusion model~\eqref{sys1}, as $\beta>\beta_c=0.0017$.

Furthermore, based on this, we find that by suitably increasing the delay value, the model transitions from stability to instability, with the critical delay value being $\tau_c \approx 23.06$ (see Fig.~\ref{bifu}). Moreover, when calculating and examining the induced instability conditions, we discover that the fractional order and diffusion terms considered by the model also play a crucial role. Thus, we also provide the corresponding fractional-order threshold of $q=0.95$ with non-diffusion and $\tau_c \approx 33.32$ (see Fig.~\ref{bifu}). We observe that a larger delay is required to render the model~\eqref{sys1} unstable as the fractional order decreases (see Fig.~\ref{bifu}).

Next, we present three sets of experiments: (1) when $q=0.95$, we choose $\tau_1=30$ and $\tau_2=40$, satisfying $\tau_1<\tau_c=33.32<\tau_2$. The time series plot and bifurcation diagram with $\tau$ as the bifurcation parameter are provided (see Fig.~\ref{time series}(a,b)). (2) When $q=1$, we select $\tau_1=20$ and $\tau_2=30$, satisfying $\tau_1<\tau_c=23.06=\tau_2$. The time series plot and bifurcation diagram with $\tau$ as the bifurcation parameter are provided (see Fig.~\ref{time series}(c,d)). (3) When $\tau=30$, we choose $q_1=0.95$ and $q_2=1$, satisfying $q_1<0.965<q_2$. The time series plot and bifurcation diagram with $q$ as the bifurcation parameter is provided (see Fig.~\ref{time series}(e,f)). Similarly, we also demonstrate the corresponding relationship between the eigenvalue of the Laplace matrix $\Lambda_i$ and the delay threshold $\tau_c$ in the cases of $q=0.95$ and $q=1$ for the fractional order when diffusion is considered (see Fig.~\ref{immunization}).

\begin{figure}[h]
\centering
\includegraphics[height=60mm, width=90mm]{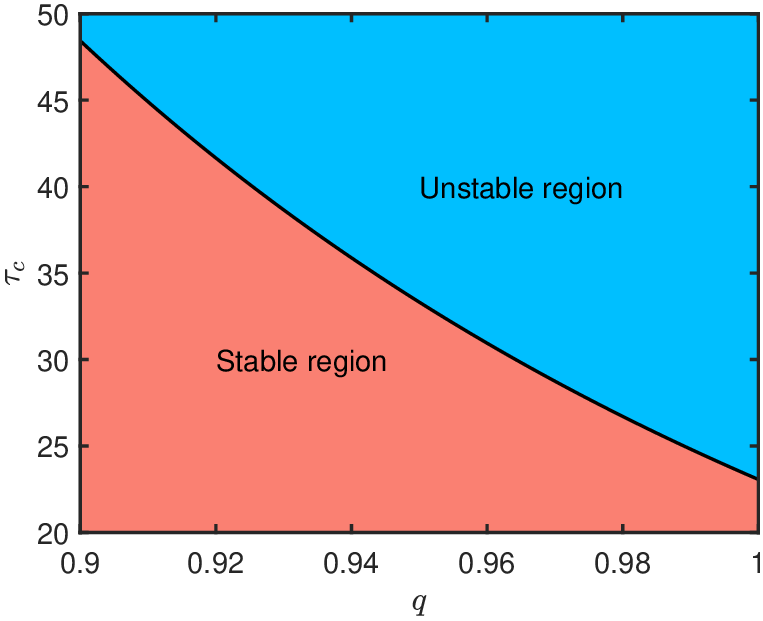}
\caption{The critical value $\tau_c$ decreases as $q$ increases when $\Lambda=5$, $\mu=0.035$, $\nu=0.05$, $\gamma=0.2$, $\alpha=0.01$, $\beta=0.006$, $d_1=0$, and $d_2=0$. In this case, the deep sky blue area represents the unstable region, while the salmon area represents the stable region.}
\label{bifu}
\end{figure}

\begin{figure}[htbp]
\centering
\subfigure[]
{\includegraphics[width=0.42\textwidth ,height=1.7in]{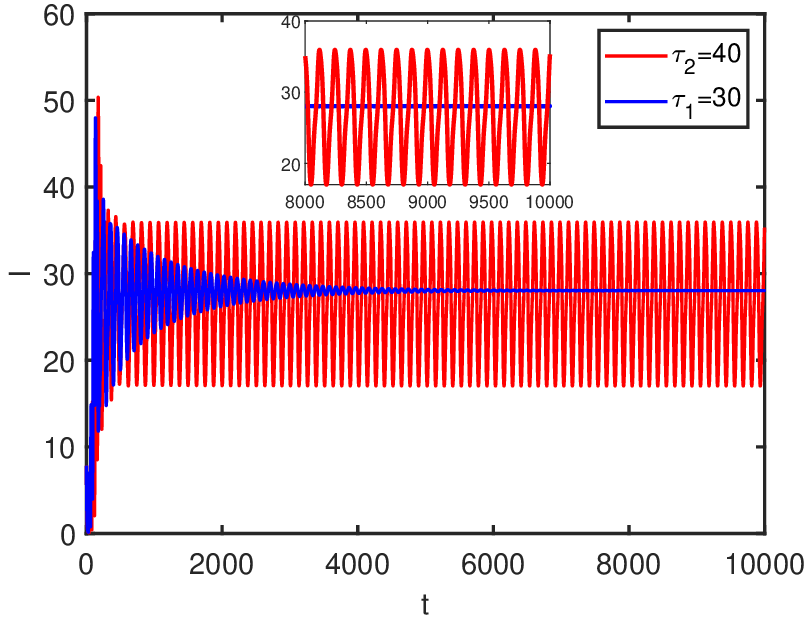}}
\subfigure[]
{\includegraphics[width=0.42\textwidth ,height=1.7in]{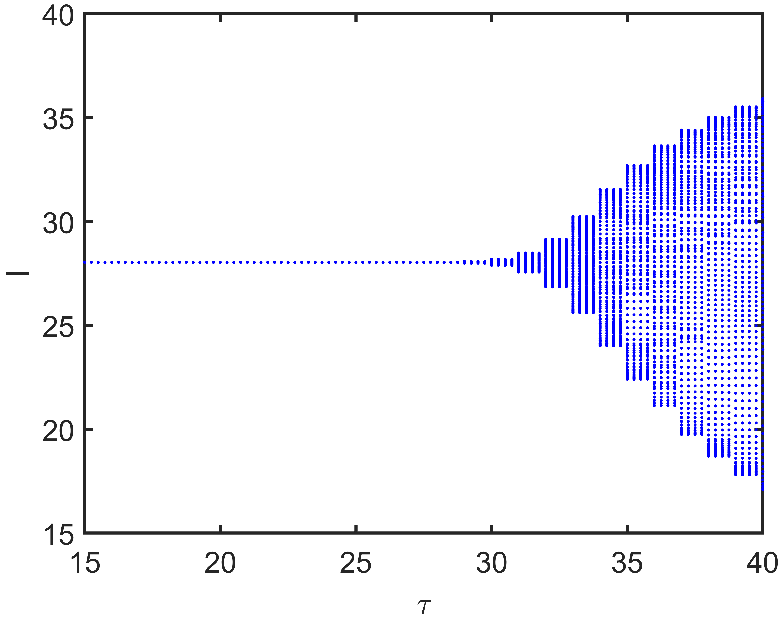}}
\subfigure[]
{\includegraphics[width=0.42\textwidth ,height=1.7in]{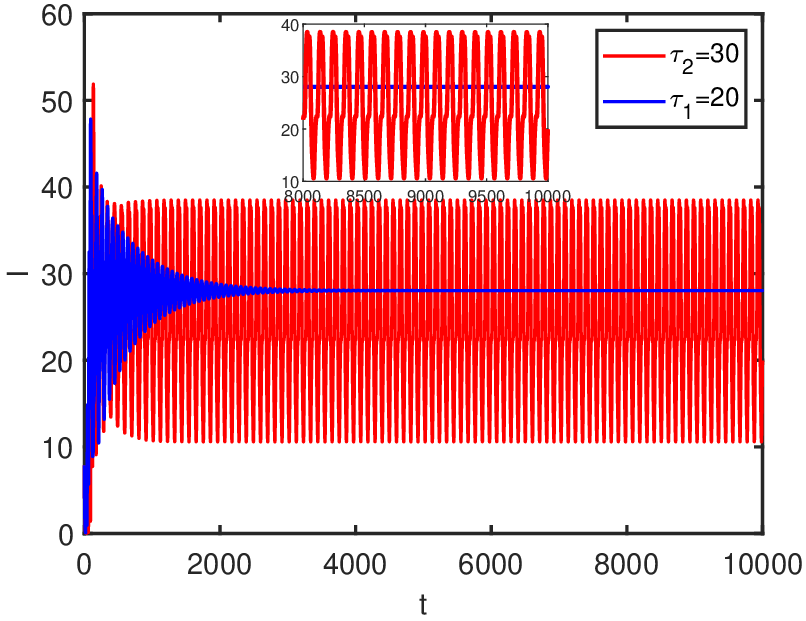}}
\subfigure[]
{\includegraphics[width=0.42\textwidth ,height=1.7in]{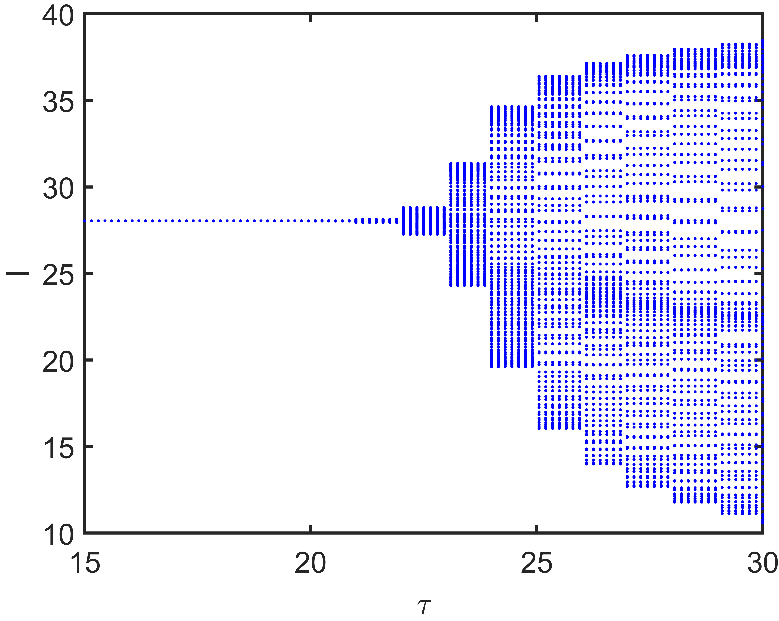}}
\subfigure[]
{\includegraphics[width=0.42\textwidth ,height=1.7in]{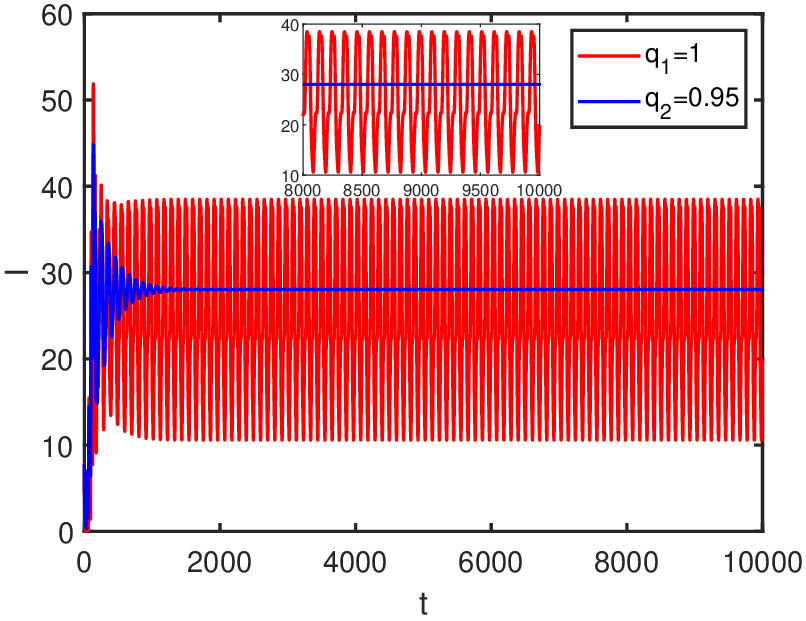}}
\subfigure[]
{\includegraphics[width=0.42\textwidth ,height=1.7in]{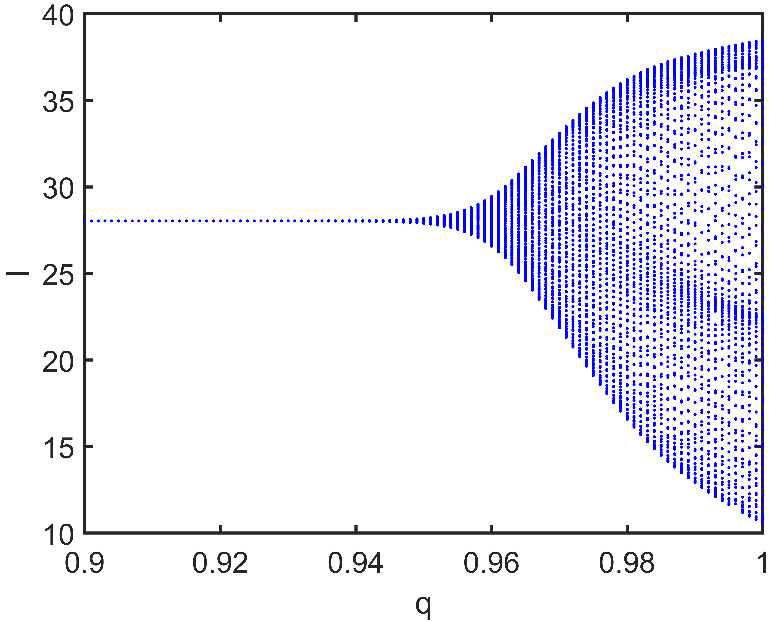}}
\caption{Time series and bifurcation diagram of the system \eqref{sys1} without network when $\Lambda=5$, $\mu=0.035$, $\nu=0.05$, $\gamma=0.2$, $\alpha=0.01$, $\beta=0.006$. (a) $E^{*}$ is stable when $\tau=30$  and periodic when $\tau=40$, $q=0.95$. (b) The Hopf bifurcation occurs about  $\tau$ when $q=0.95$. (c) $E^{*}$ is stable when $\tau=20$ and periodic $\tau=30$ when $q=1$. (d) The Hopf bifurcation occurs about $\tau$ when $q=1$. (e) $E^{*}$ is stable when $q=0.95$ and periodic when $q=1$, $\tau=30$. (f) The Hopf bifurcation occurs about $q$ when $\tau=30$. The red curve represents the periodic solution, and the blue curve represents the stable endemic disease equilibrium.
}\label{time series}
\end{figure}

\begin{figure}[h]
\centering
\subfigure[The critical value $\tau_c$ about  $\Lambda_i$ when $q=0.95$.]{
\includegraphics[width=0.45\textwidth ,height=2in]{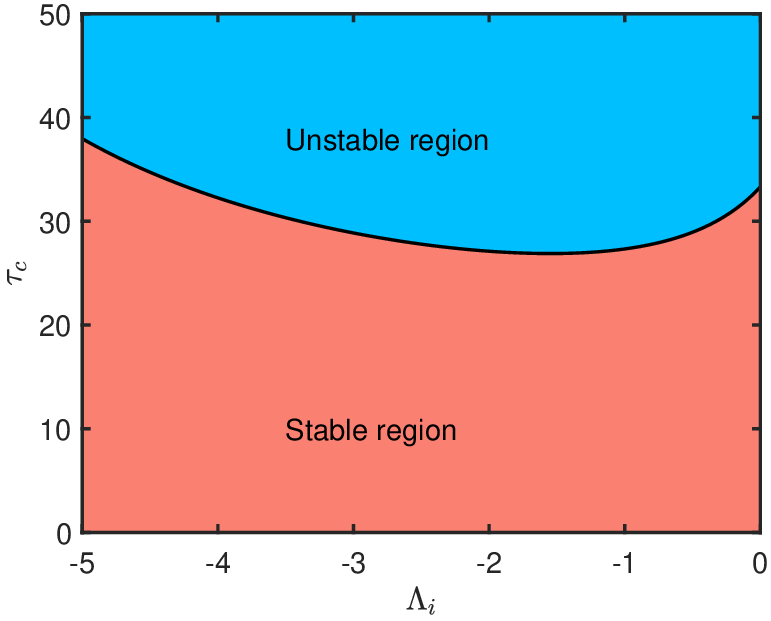}}
\subfigure[The critical value $\tau_c$ about  $\Lambda_i$ when $q=1$.]{
\includegraphics[width=0.45\textwidth ,height=2in]{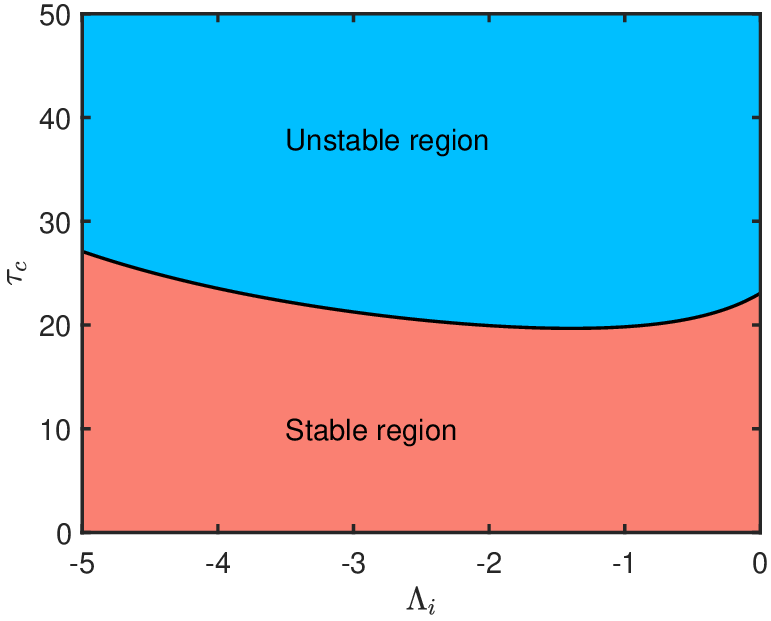}}
\caption{The critical value $\tau_c$ decreases as $\Lambda_i$ increases when $\Lambda=5$, $\mu=0.035$, $\nu=0.05$, $\gamma=0.2$, $\alpha=0.01$, $\beta=0.006$, $d_1=0.01$, and $d_2=0.08$. In this case, the deep sky blue area represents the unstable region, while the salmon area represents the stable region.}\label{immunization}
\end{figure}

To examine the pattern generation of fractional-order systems on a network with $N=100$ nodes and explore the effects of delay, network topology, and diffusion coefficients on the pattern, we have designed three additional sets of experiments. We have considered: different delays, $\tau=20$ in Fig.~\ref{1D pattern}(a,b) and Fig.~\ref{pattern}(a,b), and $\tau=40$ in Figs.~\ref{1D pattern}(c,d,e,f) and Fig.~\ref{pattern}(c,d,e,f); different average degrees, $\langle k\rangle=5$ in Fig.~\ref{1D pattern}(a,b,d,e) and Fig.~\ref{pattern}(a,b,d,e), and $\langle k\rangle=8$ in Fig.~\ref{1D pattern}(c,f) and Fig.~\ref{pattern}(c,f); and different diffusion coefficients, $d_1=0.01,d_2=0.02$ in Fig.~\ref{1D pattern}(a,c,d) and Fig.~\ref{pattern}(a,c,d), and $d_1=0.01,d_2=0.08$ in Fig.~\ref{1D pattern}(b,e,f) and Fig.~\ref{pattern}(b,e,f).

From the results obtained, we observe that spatial patterns emerge only when $\tau=40$, $d_1=0.01$, $d_2=0.08$, and $\langle k\rangle=5$, with instability in space, see Fig.~\ref{1D pattern}(e) and Fig.~\ref{pattern}(e). However, this phenomenon is irregular and distinct from the traditional Turing instability in space uniform stability, dividing into two parts (high and low abundance). In particular, as the delay increases, the system first undergoes a periodic oscillation state in time, then interacts with non-uniform oscillation in space due to the diffusion coefficient's influence at a specific time, resulting in irregular spatial non-uniform oscillation.

It is worth noting that, when we exclude the interference caused by delay-induced time-periodic oscillation and study only whether diffusion will result in pattern emergence, we do not observe pattern generation, see Fig.~\ref{1D pattern}(a,b) and Fig.~\ref{pattern}(a,b). We believe this is solely related to the SIRS model we study. Moreover, by observing Fig.~\ref{1D pattern}(e,f) and Fig.~\ref{pattern}(e,f), we also find that spatial non-uniform oscillation gradually disappears with the increase of the network's average degree. The time-period oscillation does not change with the network topology's variation, see Fig.~\ref{1D pattern}(c,d) and Fig.~\ref{pattern}(c,d).
\begin{figure}[h]
\centering
\subfigure[]
{\includegraphics[width=0.32\textwidth]{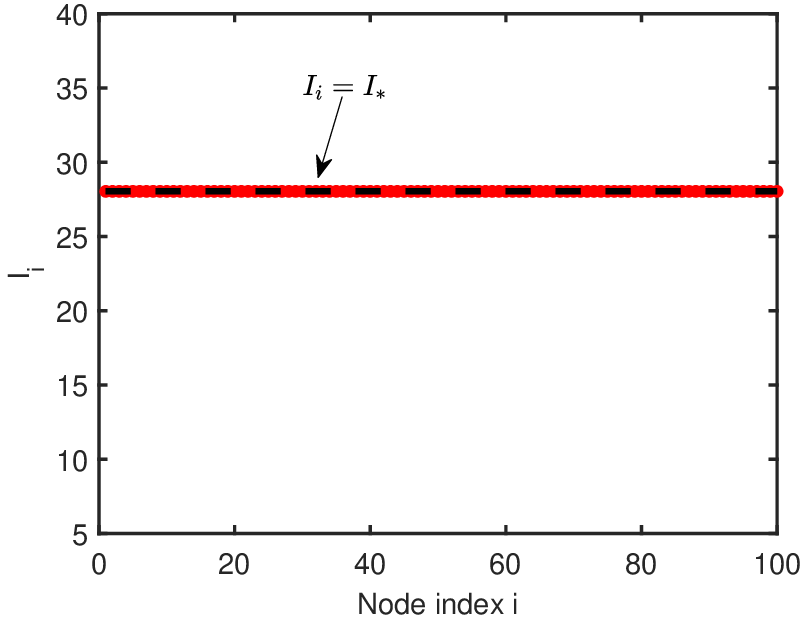}}
\subfigure[]
{\includegraphics[width=0.32\textwidth]{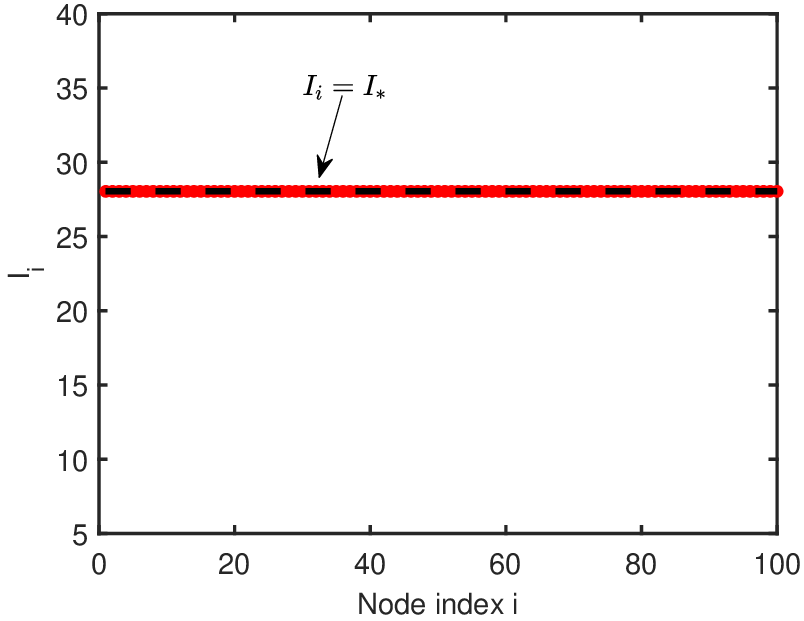}}
\subfigure[]
{\includegraphics[width=0.32\textwidth]{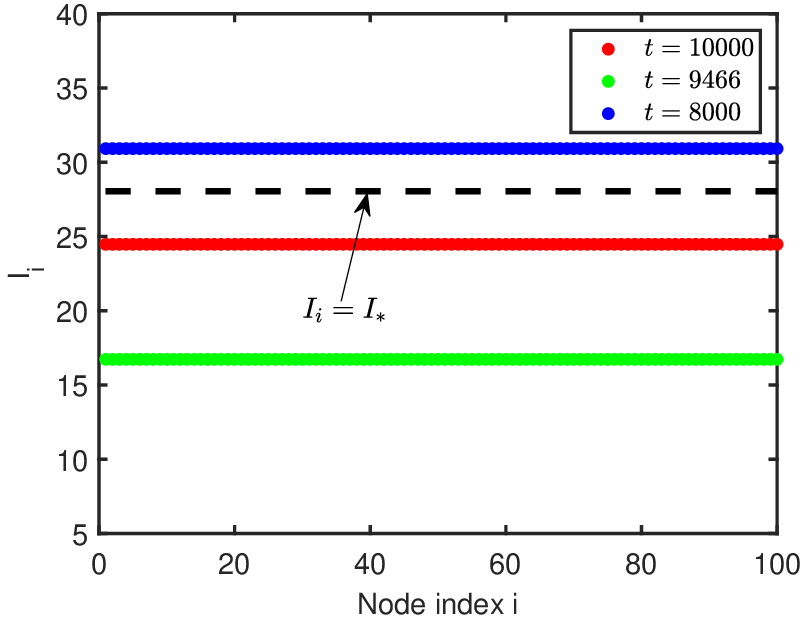}}
\subfigure[]
{\includegraphics[width=0.32\textwidth]{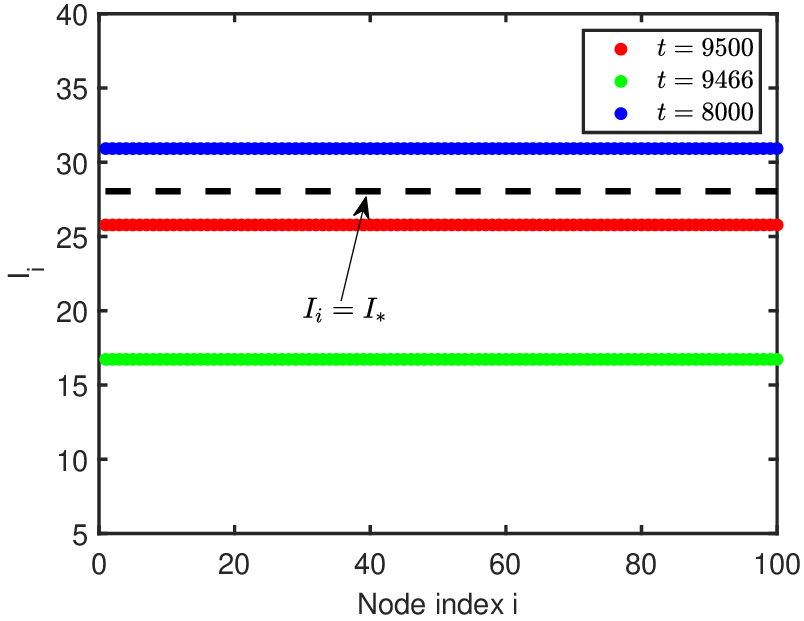}}
\subfigure[]
{\includegraphics[width=0.32\textwidth]{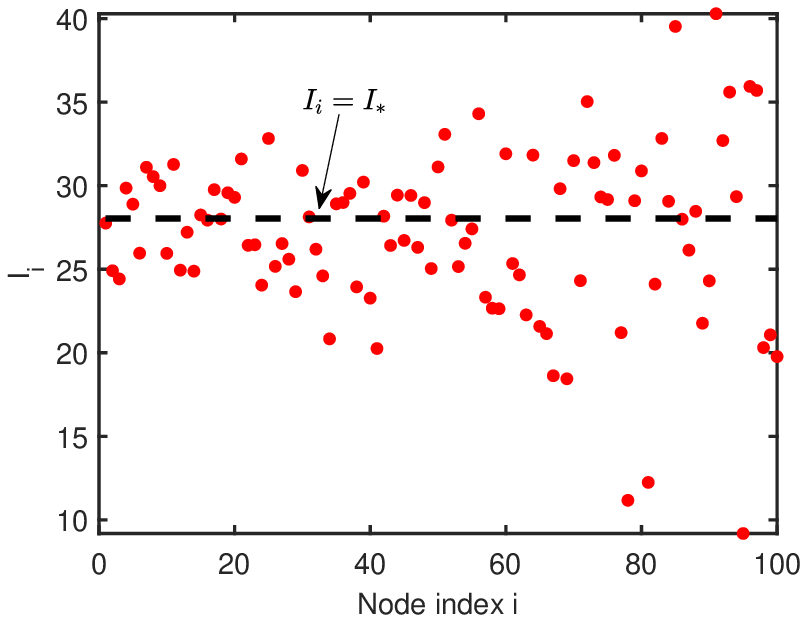}}
\subfigure[]
{\includegraphics[width=0.32\textwidth]{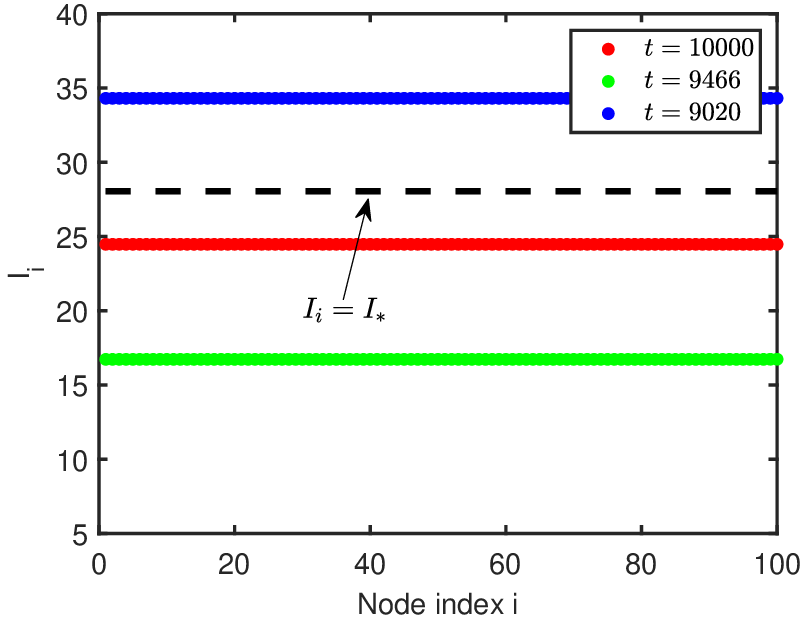}}
\caption{
Density of infected individuals as a function of node index $i$ for different values of delay $\tau$, average degree $\langle k\rangle$, and diffusion coefficients $d_1$ and $d_2$. (a)--(f) show the evolution of infected density with different nodes and times, represented by red, green, and blue dots. The black dotted line represents the value of the endemic disease equilibrium. The parameter values are as follows: (a) $\tau=20$, $d_1=0.01$, $d_2=0.02$, $\langle k\rangle=5$; (b) $\tau=20$, $d_1=0.01$, $d_2=0.08$, $\langle k\rangle=5$; (c) $\tau=40$, $d_1=0.01$, $d_2=0.02$, $\langle k\rangle=8$; (d) $\tau=40$, $d_1=0.01$, $d_2=0.02$, $\langle k\rangle=5$; (e) $\tau=40$, $d_1=0.01$, $d_2=0.08$, $\langle k\rangle=5$; and (f) $\tau=40$, $d_1=0.01$, $d_2=0.08$, $\langle k\rangle=8$.
}\label{1D pattern}
\end{figure}

\begin{figure}[h]
\centering
\subfigure[]
{\includegraphics[width=0.31\textwidth]{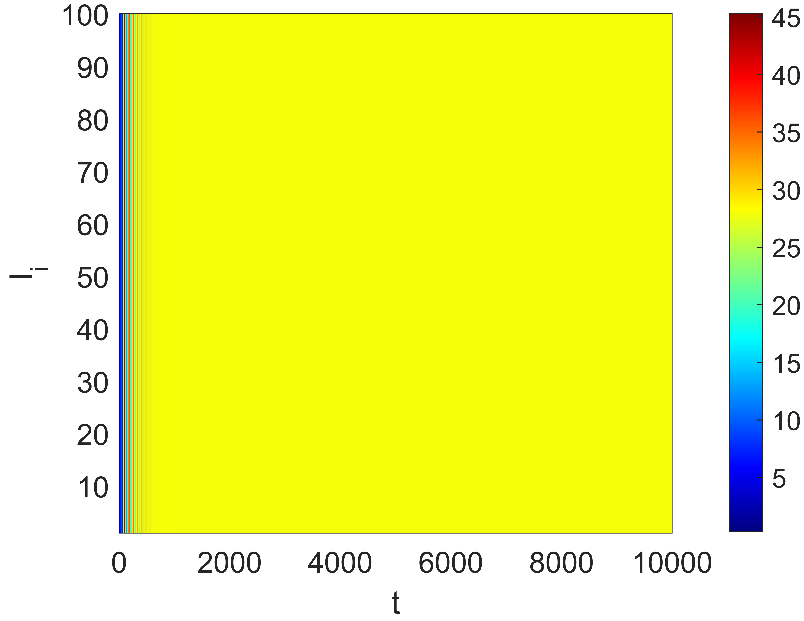}}
\subfigure[]
{\includegraphics[width=0.31\textwidth]{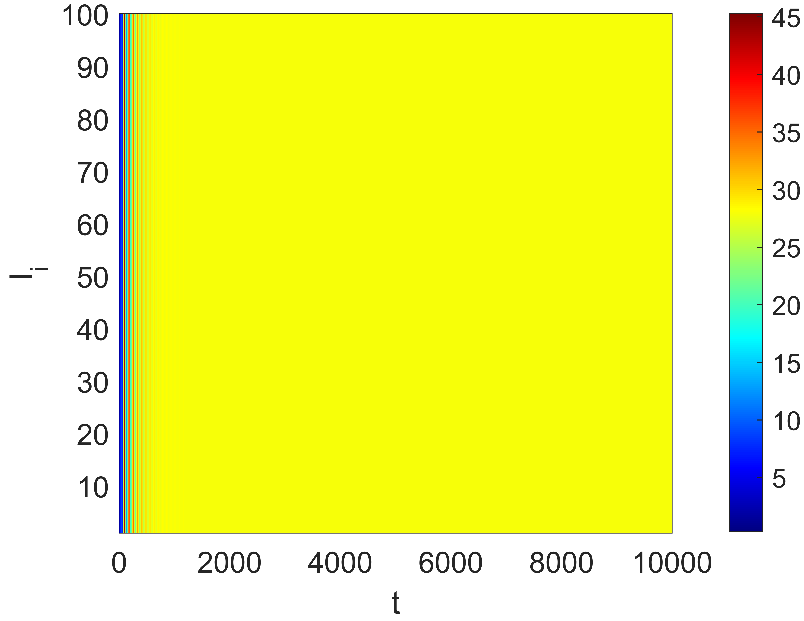}}
\subfigure[]
{\includegraphics[width=0.31\textwidth]{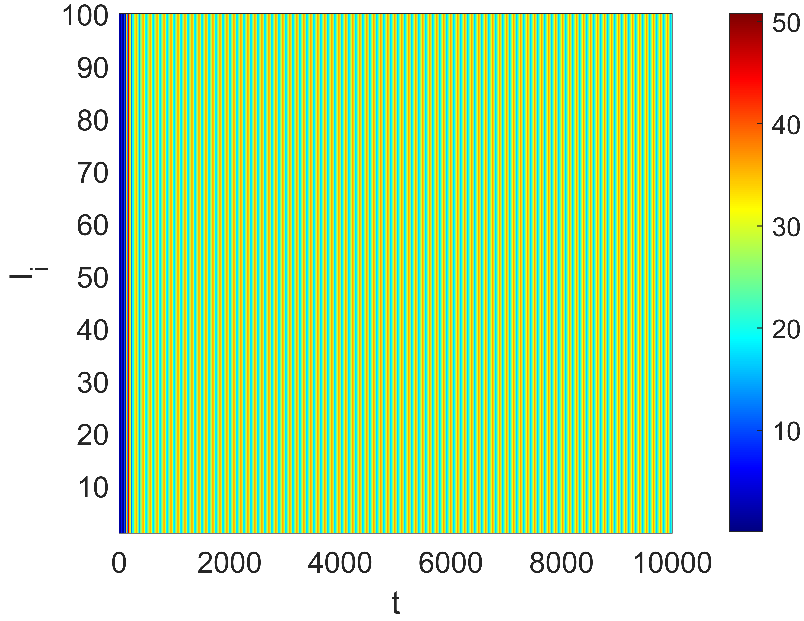}}
\subfigure[]{
\includegraphics[width=0.31\textwidth]{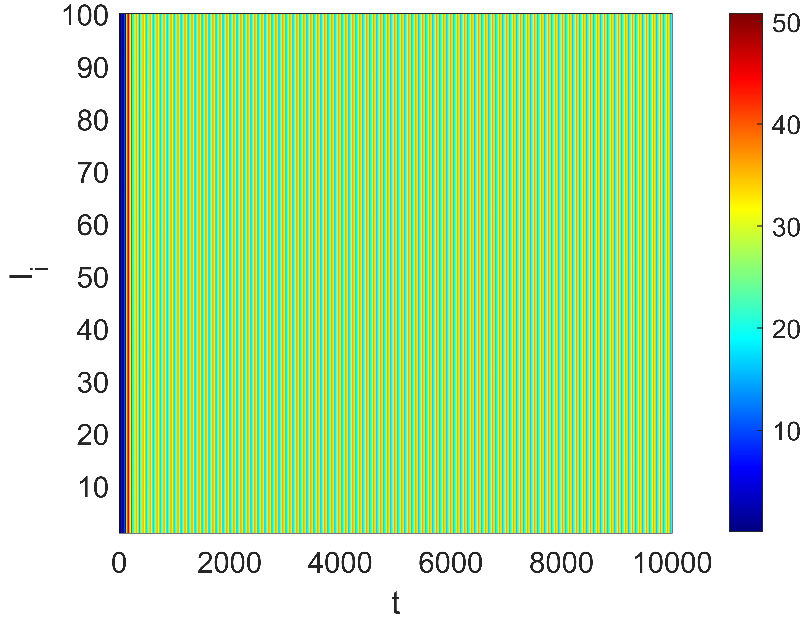}}
\subfigure[]{
\includegraphics[width=0.31\textwidth]{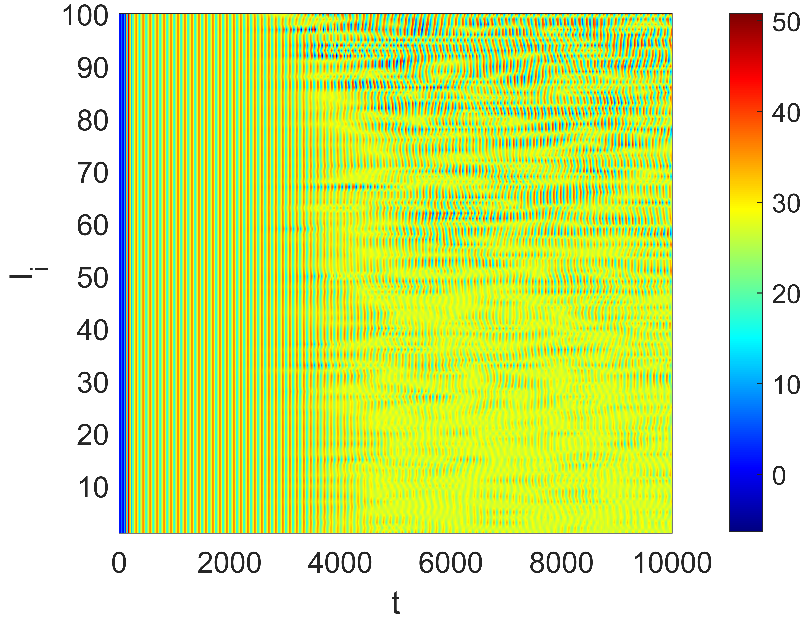}}
\subfigure[]{
\includegraphics[width=0.31\textwidth]{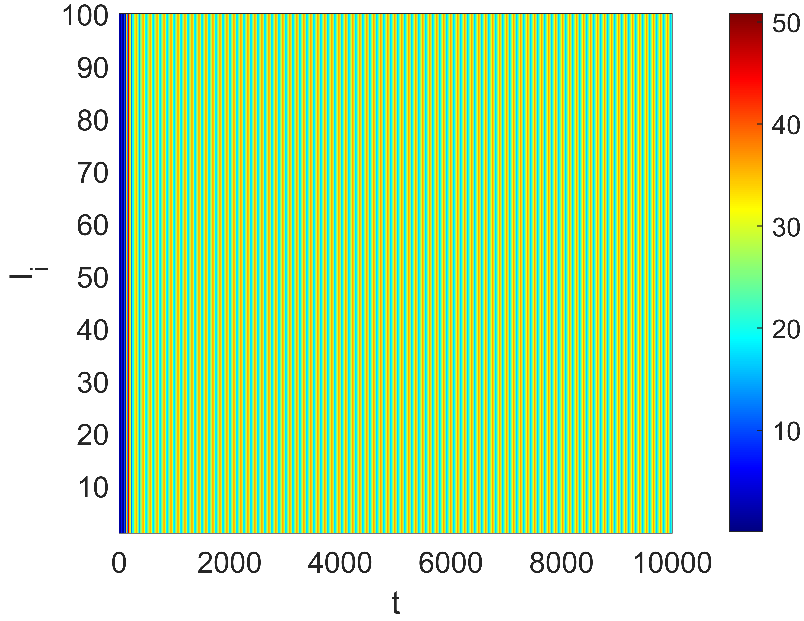}}
\caption{Spatiotemporal patterns of system \eqref{sys1}: (a) uniform spatial distribution when $\tau=20$, $d_1=0.01$, $d_2=0.02$, $\langle k\rangle=5$; (b) uniform spatial distribution when $\tau=20$, $d_1=0.01$, $d_2=0.08$, $\langle k\rangle=5$; (c) time periodic oscillation when $\tau=40$, $d_1=0.01$, $d_2=0.02$, $\langle k\rangle=8$; (d) time periodic oscillation when $\tau=40$, $d_1=0.01$, $d_2=0.02$, $\langle k\rangle=5$; (e) irregular spatiotemporal oscillation when $\tau=40$, $d_1=0.01$, $d_2=0.08$, $\langle k\rangle=5$; (f) time periodic oscillation when $\tau=40$, $d_1=0.01$, $d_2=0.08$, $\langle k\rangle=8$.}\label{pattern}
\end{figure}

We have conducted a comparative experiment to investigate the impact of fractional order on pattern generation, by designing two groups of experiments with $q=1$, see Fig.~\ref{alpha_1and0.95}(a,b,c), and $q=0.95$, see Fig.~\ref{alpha_1and0.95}(d,e,f), while keeping other parameters the same. The results show that, when $q=1$, the model generates spatiotemporal patterns, Fig.~\ref{alpha_1and0.95}(b), but when $q=0.95$, the spatiotemporal patterns disappeared, Fig.~\ref{alpha_1and0.95}(e). The density of infected individuals, $I_i$, on the ER random network over time and the curves of the maximum, minimum, and average values of the infected individual density across all nodes on the network over time are also shown in Fig.~\ref{alpha_1and0.95}(a,c,d,f) to support our observation. Note that the forward Euler method was used as the primary numerical method, with $\Delta t=1$, $T=20000$, $h=0.1$, and the 2D simulation regions $(x,y) \in \Omega=[0,4] \times[0,4]$ under Neumann boundary conditions. The Laplacian matrix was rewritten as a Laplace operator $\Delta$ in continuous media. Finally, Fig.~\ref{Initial value induced pattern}(a,b,c) show interesting patterns that appear due to the influence of the initial values, confirming that system~\eqref{sys1} also has spatiotemporal patterns in continuous media.

\begin{figure}[h]
\centering
\subfigure[]
{\includegraphics[width=0.31\textwidth ,height=1.6in]{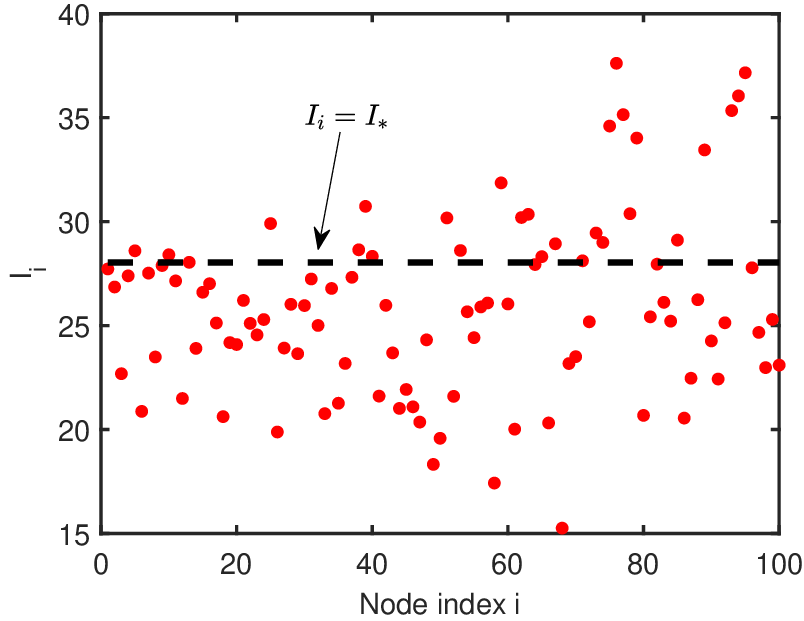}}
\subfigure[]
{\includegraphics[width=0.31\textwidth ,height=1.6in]{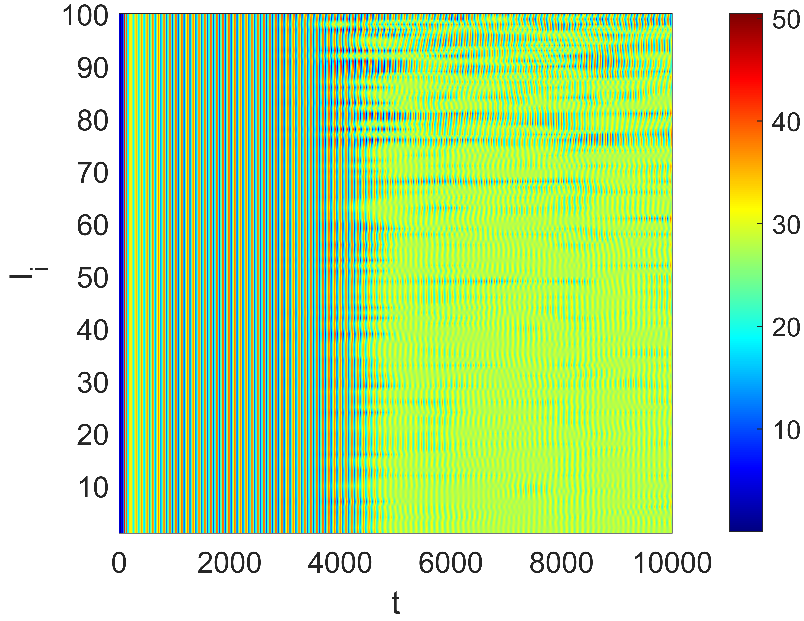}}
\subfigure[]
{\includegraphics[width=0.31\textwidth ,height=1.6in]{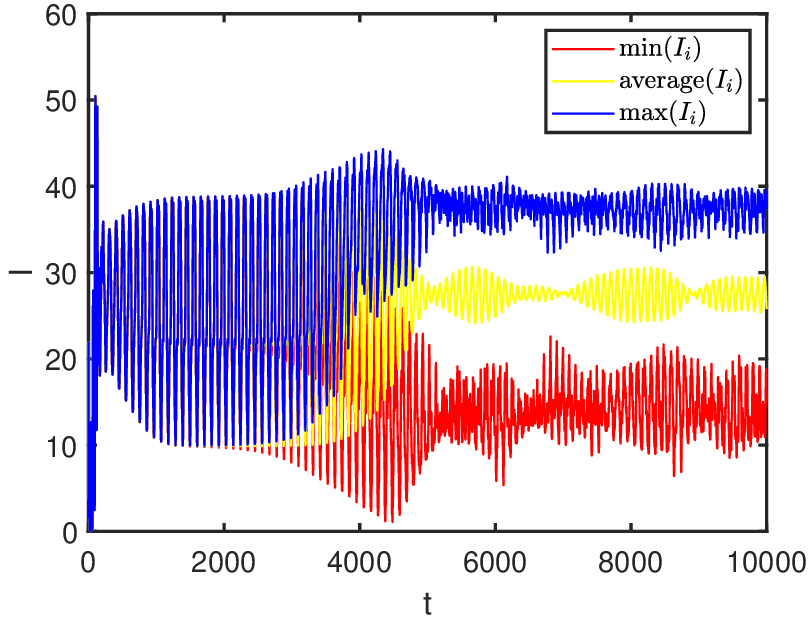}}
\subfigure[]{
\includegraphics[width=0.31\textwidth ,height=1.6in]{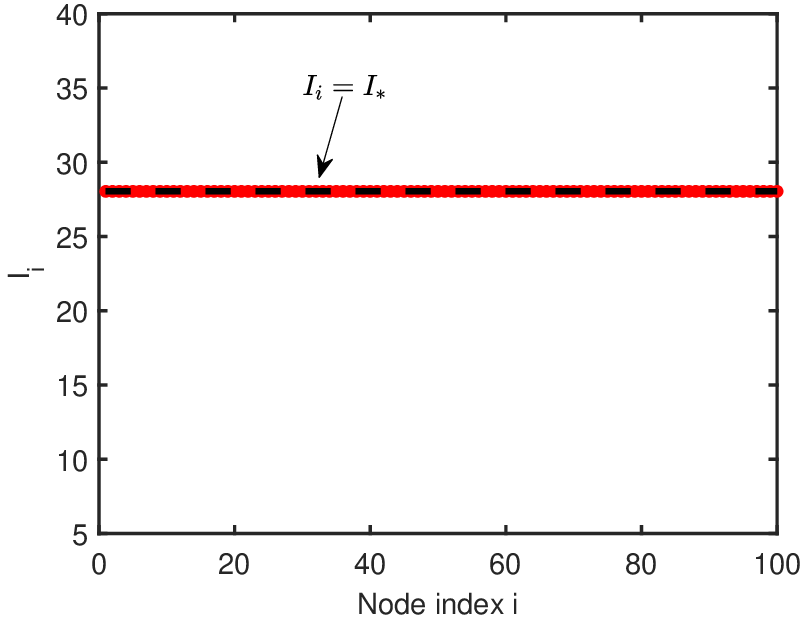}}
\subfigure[]{
\includegraphics[width=0.31\textwidth ,height=1.6in]{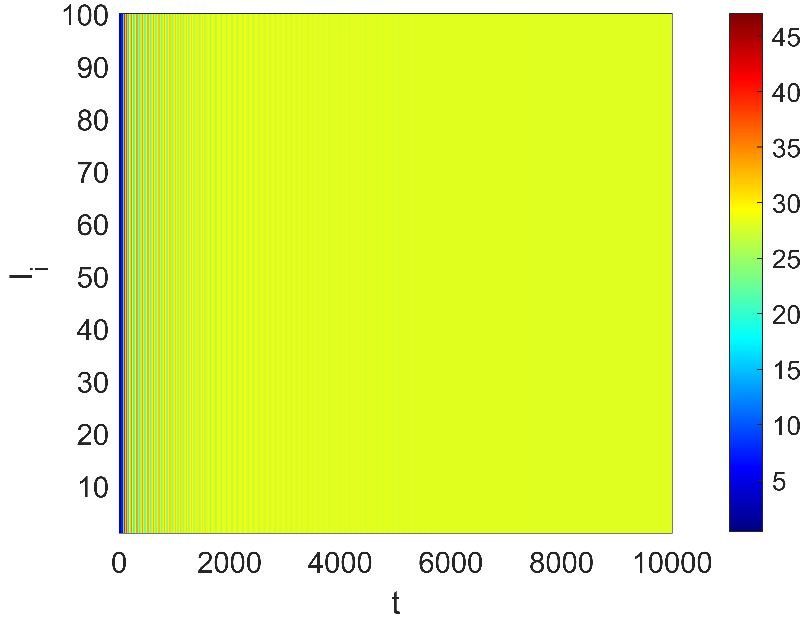}}
\subfigure[]{
\includegraphics[width=0.31\textwidth ,height=1.6in]{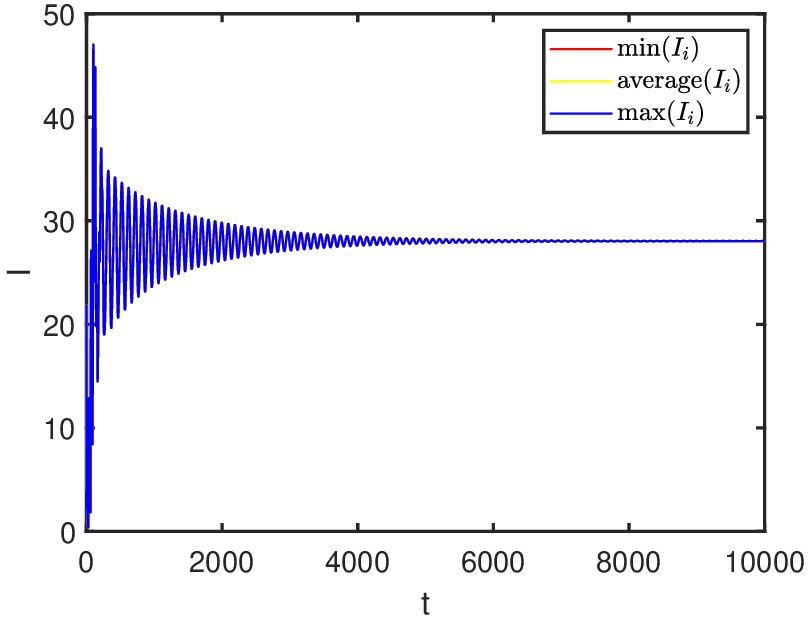}}
\caption{Spatiotemporal patterns of system \eqref{sys1} with different fractional orders. (a), (b), and (c) show spatiotemporal patterns generated when $\tau=30$, $d_1=0.01$, $d_2=0.05$, $\langle k\rangle=5$ and $q=1$. (d), (e), and (f) show no spatiotemporal patterns generated when $\tau=30$, $d_1=0.01$, $d_2=0.05$, $\langle k\rangle=5$ and $q=0.95$. Red dots represent the relationship between the density of infected individuals and node index $i$, while the black dotted line represents the endemic disease equilibrium.}\label{alpha_1and0.95}
\end{figure}

\begin{figure}[h]
\centering
\subfigure[]
{\includegraphics[width=0.32\textwidth]{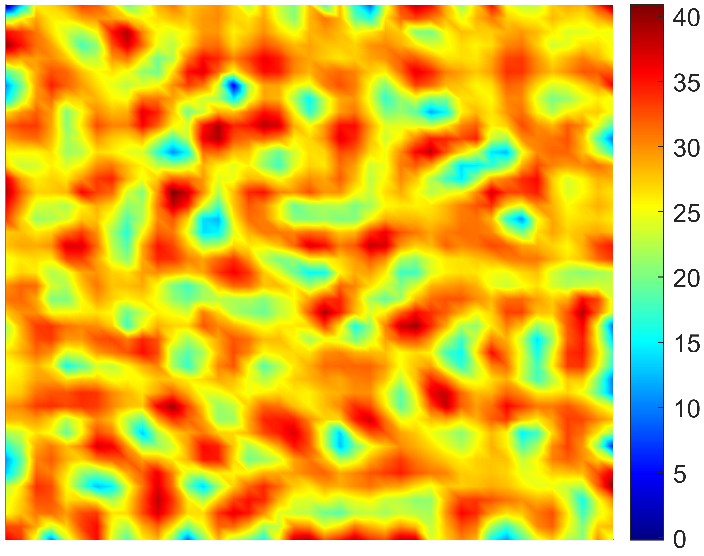}}
\subfigure[]
{\includegraphics[width=0.32\textwidth]{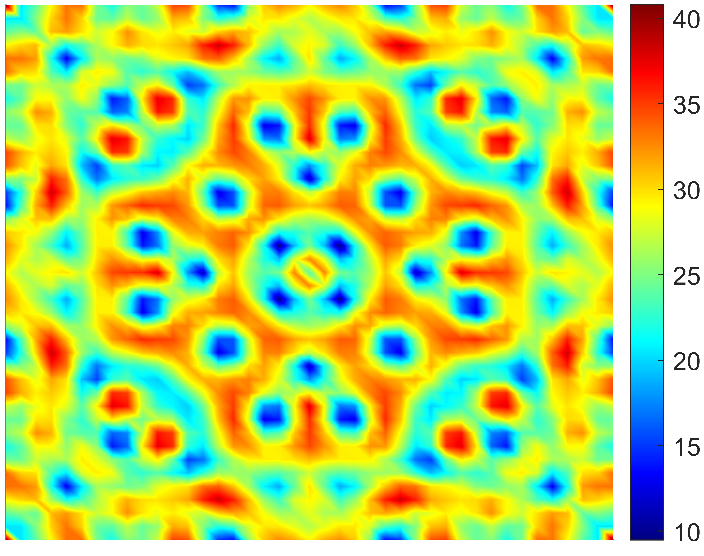}}
\subfigure[]
{\includegraphics[width=0.32\textwidth]{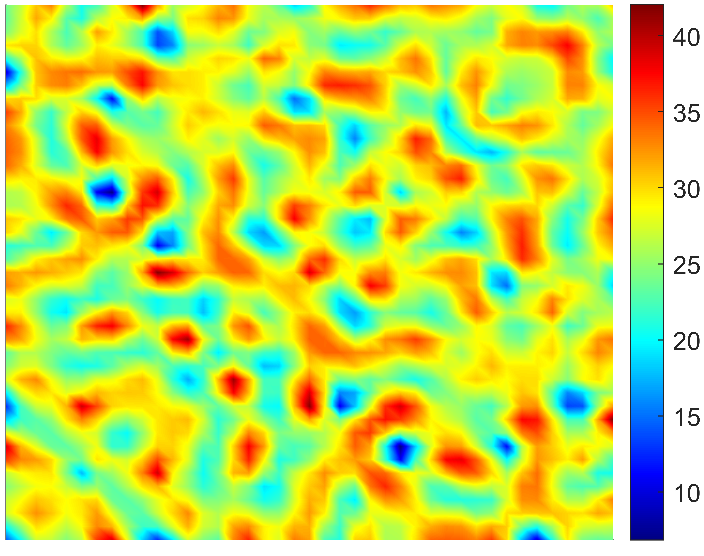}}
\caption{Initial value induced spatiotemporal patterns. When $t=20000$, for (a), $\left(S_0, I_0, R_0\right)=\left(S_*+0.01 \times \operatorname{rand}(0, 1), I_*+0.01 \times \operatorname{rand}(0, 1), R_*+0.01 \times \operatorname{rand}(0, 1)\right)$. For (b), $\left(S_0, I_0, R_0\right)=\left(S_*, I_*, R_*\right)+0.01 \times(1, 1, 1)$ if $(x-2)^2+(y-2)^2<2$, otherwise $\left(S_0, I_0, R_0\right)=\left(S_*, I_*, R_*\right)$. For (c), $\left(S_0, I_0, R_0\right)=\left(S_*, I_*, R_*\right)+0.01 \times(1, 1, 1)$ if $(x-1)^2+(y-1)^2<1$, $(x-1)^2+(y-3)^2<1$, $(x-3)^2+(y-1)^2<1$, or $(x-3)^2+(y-3)^2<1$, otherwise $\left(S_0, I_0, R_0\right)=\left(S_*, I_*, R_*\right)$. Other parameter values are $q=0.98$, $\tau=40$, $\Lambda=5$, $\mu=0.035$, $\nu=0.05$, $\gamma=0.2$, $\alpha=0.01$, $\beta=0.006$, $d_1=0.01$ and $d_2=0.08$.}\label{Initial value induced pattern}
\end{figure}

Finally, to better reflect the visualization of the main conclusions, we take several factors mainly considered in this paper, such as the network average degree, delay, and fractional order, as independent variables. With the help of numerical simulation, the evolution process of spatiotemporal patterns can be observed, as shown in Fig.~\ref{Evolution of spatiotemporal patterns} and Fig.~\ref{Diffusion introduced the evolution of spatiotemporal patterns}. Among them, from Fig.~\ref{Evolution of spatiotemporal patterns}(a) we can learn that, as the average degree increases, the spatial distribution of the population gradually becomes uniform, which means that the network average degree will inhibit the generation of spatial patterns. Similarly, when we fix the value of other parameters and change the delay parameter, it can be intuitively deduced that spatial patterns will appear as the delay increases, see Fig.~\ref{Evolution of spatiotemporal patterns}(b), which is also in good agreement with the theoretical analysis results. Considering the particularity of fractional-order systems, exciting phenomena occur when we change the order of fractional order. As the order increases, uniform spatial distribution is broken, resulting in spatiotemporal patterns. In other words, a decrease in the fractional order will inhibit the generation of spatiotemporal patterns. see Fig.~\ref{Evolution of spatiotemporal patterns}(c). In addition, we have also tested the evolution of spatiotemporal patterns with diffusion rate changes under two delays, $\tau=20$ for Fig.~\ref{Diffusion introduced the evolution of spatiotemporal patterns}(a), and $\tau=40$ for Fig.~\ref{Diffusion introduced the evolution of spatiotemporal patterns}(b). These results show that a single delay or diffusion rate effect does not lead to the generation of spatiotemporal patterns.
\begin{figure}[h]
\centering
\subfigure[]
{\includegraphics[width=0.32\textwidth]{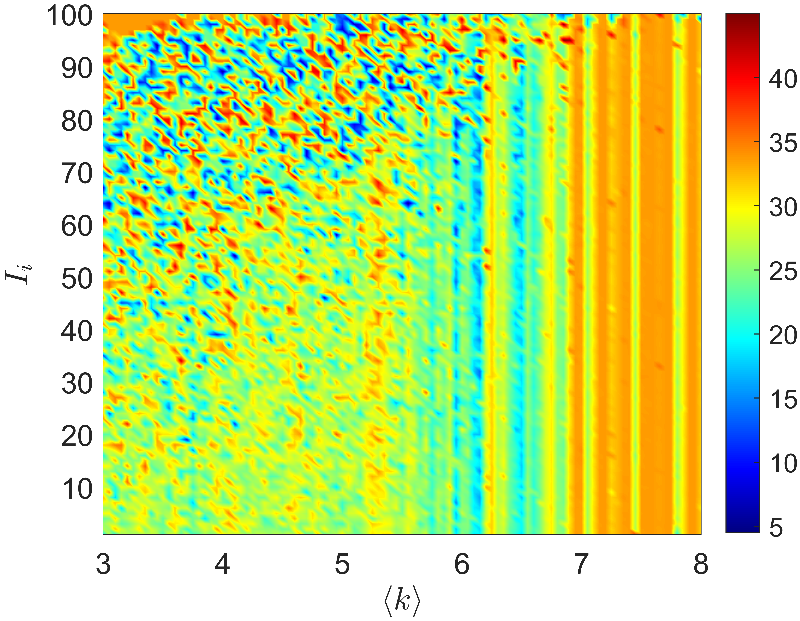}}
\subfigure[]
{\includegraphics[width=0.32\textwidth]{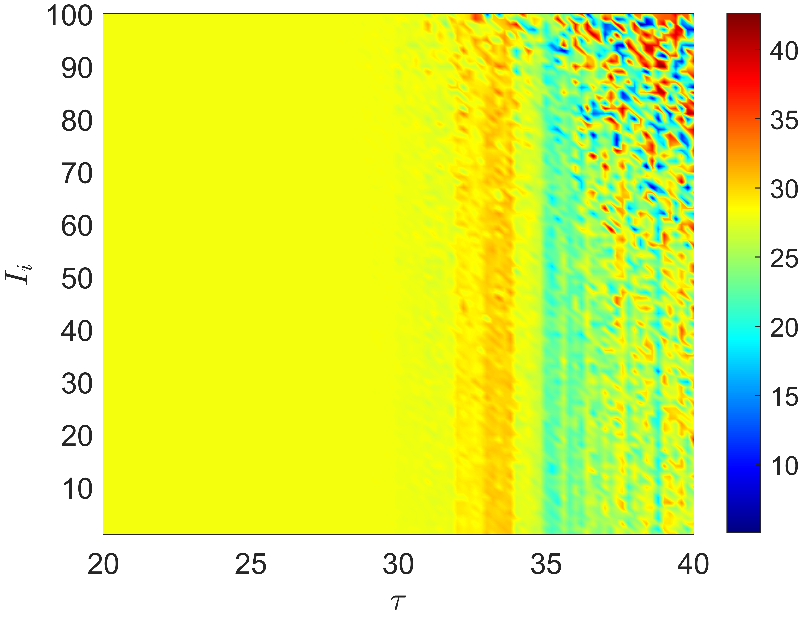}}
\subfigure[]
{\includegraphics[width=0.32\textwidth]{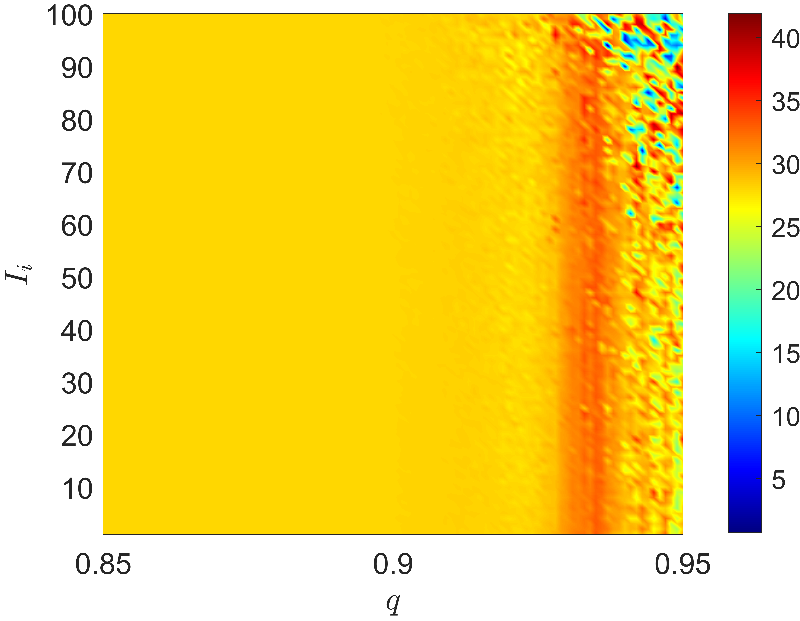}}
\caption{The evolution of spatiotemporal patterns with the average degree $\langle k \rangle$ for (a), delay $\tau$ for (b), and fractional order $q$ for (c), respectively. The values of the parameter are $\Lambda=5$, $\mu=0.035$, $\nu=0.05$, $\gamma=0.2$, $\alpha=0.01$, $\beta=0.006$, $d_1=0.01$, $\langle k \rangle=5$, and $d_2=0.08$.}\label{Evolution of spatiotemporal patterns}
\end{figure}

\begin{figure}[h]
\centering
\subfigure[]
{\includegraphics[width=0.45\textwidth]{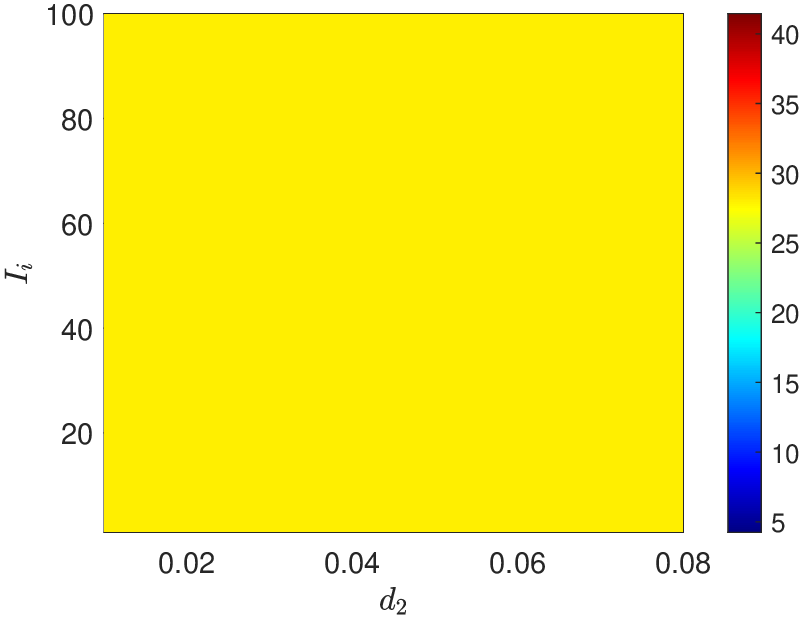}}
\subfigure[]
{\includegraphics[width=0.45\textwidth]{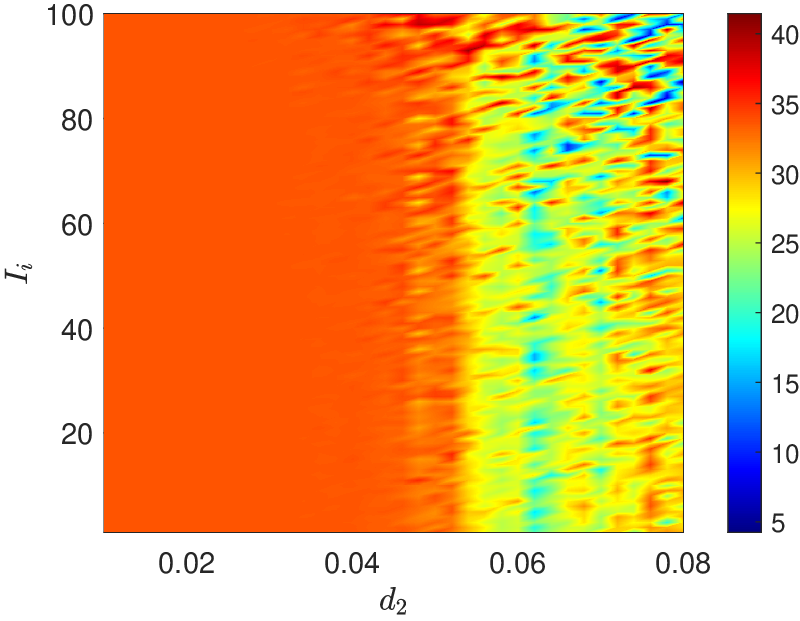}}
\caption{The evolution of spatiotemporal patterns under the combined effects of diffusion rate and delay, where $\tau=20$ for (a), $\tau=40$ for (b). The values of the parameter are $\Lambda=5$, $\mu=0.035$, $\nu=0.05$, $\gamma=0.2$, $\alpha=0.01$, $\beta=0.006$, $\langle k \rangle=5$, and $d_1=0.01$.}\label{Diffusion introduced the evolution of spatiotemporal patterns}
\end{figure}

\section{Conclusions}\label{section5}

Our results provide a new perspective for the research of delay-induced time-fractional order systems on networks. Among them, network topology, diffusion coefficient, and delay have essential effects on the excitation of the Turing pattern and the final differentiation of steady-state nodes. In the case of non-diffusion, the time-periodic oscillation phenomenon of the system is closely related to fractional order and delay. The reduction of fractional order promotes the stability of the system, while delay causes the instability of the system and leads to periodic oscillations. When we consider the diffusion term, with appropriate parameter values, the time-periodic oscillation phenomenon still exists and is not affected by the network topology. In particular, when the delay, diffusion coefficient, and average degree of the network are at appropriate values, an interesting phenomenon occurs, namely, irregular spatial non-uniform oscillation. Our explanation for this phenomenon is that the system first experiences a time-periodic oscillation state with the increase of delay, and then interacts with the non-uniform oscillation in space due to the effect of the diffusion coefficient at a particular time, resulting in the generation of irregular spatial non-uniform oscillation, i.e., spatiotemporal patterns.

It should be noted that the reason why fractional order inhibits the generation of spatiotemporal patterns is explained by the fact that fractional order affects temporal periodic oscillations, leading to the disappearance of the original temporal and spatial interaction, which does not alter the original uniform spatial distribution. Considering that the only network type studied in this paper is the Erd\H{o}s-R\'{e}nyi, we still know little about what effect the higher-order structure of the network has on the Turing pattern of the fractional-order system. Although some recent works have studied the pattern formation of networks with higher-order structures, most of them are based on integer-order systems~\cite{gao2023turing,muolo2023turing}. Therefore, in future research, we will study the pattern formation of the framework based on the fractional-order system and the higher-order structure of the network.

\section*{Declaration of competing interest}
The authors declare that they have no known competing financial interests or personal relationships that could have appeared to influence the work reported in this paper.

\section*{Data availability}
No data was used for the research described in this article.

\section*{Acknowledgements}
J.Z., Y.Y.\ and Y.Z.\ acknowledge support from the Nature Science Foundation of Guangdong Province (2020A1515010812 and 2021A1515011594). J.Z.\ acknowledges support by the scholarship from the China Scholarship Council (202106120290). Y.Y.\ acknowledges support by the scholarship from the China Scholarship Council (202206120230). A.A.\ and S.G.\ acknowledge support from Spanish Ministerio de Ciencia e Innovaci\'on (PID2021-128005NB-C21), Generalitat de Catalunya (2021SGR-00633) and Universitat Rovira i Virgili (2022PFR-URV-56). A.A.\ also acknowledges support from ICREA Academia, and the James S.\ McDonnell Foundation (220020325).
This work was carried out by Y.Y. and J.Z. during their tenure as visiting students at the Universitat Rovira i Virgili (URV).


\end{document}